\renewcommand\sout{\bgroup\markoverwith
{\textcolor{red}{\rule[0.7ex]{3pt}{1.4pt}}}\ULon}
\newcommand{\ig}[1]{{}}
\newcommand{\CC}{\mathbb C}
\newcommand{\EE}{\mathbb E}
\newcommand{\RR}{\mathbb R}
\newcommand{\maE}{\mathcal E}
\newcommand{\maM}{\mathcal M}
\newcommand{\maR}{\mathcal R}
\newcommand\pa{\partial}
\newcommand\adj{\operatorname{ad}}
\newcommand\ede{\, := \,}
\newcommand\seq{\, = \,}
\newtheorem{theorem}{Theorem}[section]
\newtheorem{lemma}[theorem]{Lemma}
\newtheorem{proposition}[theorem]{Proposition}
\newtheorem{corollary}[theorem]{Corollary}
\theoremstyle{definition}
\newtheorem{remark}[theorem]{Remark}
\begin{document}

\title[Vol-of-vol expansion for SABR]{A volatility-of-volatility
  expansion of the option prices in the SABR stochastic volatility
  model}

\author[O. Grishchenko]{Olesya Grishchenko}
\email{olesya.grishchenko@gmail.com} \address{Division of Monetary
  Affairs, Federal Reserve Board, Washington, DC}

\author[X. Han]{Xiao Han} \email{xhan581@gmail.com} \address{IROM
  Department, McCombs School of Business, UT Austin, Austin, TX 78705
}

\author[V. Nistor]{Victor Nistor} \email{nistor@math.psu.edu}
\address{Pennsylvania State University, University Park, PA 16802, USA
  and Universit\'{e} Lorraine, 57000 Metz, France}

\thanks{The opinions expressed in this paper are those of the authors
  and do not necessarily reflect the views of the Board Governors of
  the Federal Reserve System or any other individuals within the
  Federal Reserve System. Manuscripts available from {\bf
    http:{\scriptsize//}www.math.psu.edu{\scriptsize/}nistor{\scriptsize/}.
} }


\begin{abstract} 
We propose a general, very fast method to quickly approximate the
solution of a parabolic Partial Differential Equation (PDEs) with
explicit formulas. Our method also provides equaly fast approximations
of the derivatives of the solution, which is a challenge for many
other methods. Our approach is based on a computable series expansion
in terms of a ``small'' parameter. As an example, we treat in detail
the important case of the SABR PDE for $\beta = 1$, namely
$\partial_{\tau}u = \sigma^2 \big [ \frac{1}{2} (\partial^2_xu -
  \partial_xu) + \nu \rho \partial_x\partial_\sigma u + \frac{1}{2}
  \nu^2 \partial^2_\sigma u \, \big ] + \kappa (\theta - \sigma)
\pa_\sigma$, by choosing $\nu$ as small parameter. This yields $u =
u_0 + \nu u_1 + \nu^2 u_2 + \ldots$, with $u_j$ independent of
$\nu$. The terms $u_j$ are explicitly computable, which is also a
challenge for many other, related methods. Truncating this expansion
leads to computable approximations of $u$ that are in ``closed form,''
and hence can be evaluated very quickly. Most of the other related
methods use the ``time'' $\tau$ as a small parameter. The advantage of
our method is that it leads to shorter and hence easier to determine
and to generalize formulas. We obtain also an explicit expansion for
the implied volatility in the SABR model in terms of $\nu$, similar to
Hagan's formula, but including also the {\em mean reverting term.} We
provide several numerical tests that show the performance of our
method. In particular, we compare our formula to the one due to
Hagan. Our results also behave well when used for actual market data
and show the mean reverting property of the volatility.
\end{abstract}

\maketitle
\tableofcontents

\section*{Introduction}

We propose a new, general, non-iterative method to construct fast
approximations to suitable parabolic Partial Differential Equations by
constructing closed form approximate formulas for their solutions. In
addition to being very fast, our method has also the advantage that it
can be used to approximate also the derivatives of the solution. In a
nut-shell, our method is to first use a Dyson perturbative series to
expand the solution $u$ of our parabolic PDE in terms of a ``small''
parameter and then to explicitly {\em compute} the terms of the series
expansion. Dyson series expansions have long been used in
calculations; our achievement is to realize such a series with {\em
  computable} terms.

Our method expands the scope of the general perturbative-type method
devised and used in \cite{Wen1, CCMN, Labordere05}.  We illustrate our
method by providing closed form approximations to the solution $u$ of
the SABR partial differential equation (PDE) with mean reversion (the
$\lambda$SABR model) that arises in option pricing. Being able to
treat the mean reversion term in the SABR PDE is another advantage of
our method. Whereas in the aforementioned papers the small parameter
is the time, in this paper, the small parameter is the
``volatility-of-volatility'' parameter $\nu$ (see Equation
\eqref{eq.lambdaSABR}). This sets apart our approach from the previous
papers using series expansions, but see also \cite{Gatheral12a, Les03,
  Les15, Les17, Pascucci12, Pascucci17, Siyan}. The Dyson perturbative
series expansion is obtained by iterating Duhamel's formula, so we
will call it the {\em Duhamel-Dyson perturbative series expansion}.
See below for more on the earlier results, in general, and on the
Duhamel-Dyson perturbative series expansions, in particular.
  
We are interested in this paper in obtaining concrete results, so we
tried to keep the theoretical considerations (including proofs) to a
minimum, striving instead to carefully explain our method and to
obtain as quickly as possible our formulas, which we then tested
numerically.

\subsection*{The mathematical problem: the $\lambda$SABR PDE}
While our method is quite general, since we want to illustrate and
test our method through explicit results, we shall concentrate mainly
in this paper on the PDE
\begin{equation}\label{eq.lambdaSABR}
\begin{cases}
  \ \partial_{\tau}u \, = \, \sigma^2 \big [ \frac{1}{2}
    (\partial^2_xu - \partial_xu) + \nu \rho \partial_x\partial_\sigma
    u + \frac{1}{2} \nu^2 \partial^2_\sigma u \, \big ] \ + \ \kappa
  (\theta - \sigma) \pa_\sigma & \\
  \ u(x, \sigma, 0) = v(x, \sigma) \,. & \\
\end{cases}
\end{equation}
This PDE is obtained from the usual SABR PDE \cite{Les03, Les15} by
using the change of variables $x = \ln (S e^{rt})$ and $\tau = T - t$,
by fixing $\beta = 1$, and by including the standard mean reverting
term $\kappa (\theta - \sigma) \pa_\sigma$.  We shall use the term
$\lambda$SABR PDE for Equation~\eqref{eq.lambdaSABR}. When $\kappa =
0$, we shall also use the term SABR PDE.

In order to set up our perturbative method, we write $\kappa = \nu
\kappa_0$, for some fixed parameter $\kappa_0$.  That is, we consider
$\kappa$ of the same order as $\nu$. Therefore, when $\nu = 0$, the
SABR PDE reduces to the forward Black-Scholes PDE (for the forward
price and in $x = \ln (S e^{r\tau})$)
\begin{equation}\label{eq.BS.PDE}
 \pa_{\tau} u - \frac{1}{2}\sigma^2(\pa_x^2 - \pa_x) u= 0\,.
\end{equation} 
(here $\tau = T - t$). If $u_0$ is the solution of the Black-Scholes
equation, our method provides an expansion of the form
\begin{equation}\label{eq.taylor.u}
 u \, = \, u_0 + \nu u_1 + \nu^2 u_2 + \ldots + \nu^k u_k +
 O(\nu^{k+1}) \,.
\end{equation}

For general initial data $v$ in Equation \eqref{eq.lambdaSABR}, the
terms $u_j$ can be computed using an integration of $v$ with respect
to the Green functions, but this is not fully explicit (see Remarks
\ref{rem.kernel} and \ref{rem.bootstrap}). To obtain fully explicit
(or ``closed form'') results, we specialize our initial condition
$v(x, \sigma) = u(x, \sigma, 0)$ to the one of interest in practice,
that is, to $v =h$ given by
\begin{equation}\label{eq.def.h}
  h(x, \sigma) \seq |e\sp{x} - K|_{+} \ede \max\{ e\sp{x} - K, 0 \}\,.
\end{equation}
For this initial data, we also use $u = F_{SA}(S, K, \nu, \sigma,
\rho, \tau)$ to denote the solution of the $\lambda$SABR PDE, Equation
\eqref{eq.lambdaSABR}.  Let $F_{BS}$ be the Black-Scholes formula (or
function), which is the solution of the Black-Scholes PDE with initial
data $h = |e\sp{x} - K|_{+}$. Therefore,
\begin{equation}\label{eq.price.nu=0}
 F_{SA}(S, K, 0, \sigma, \rho, \tau) \, = \, F_{BS}(S, K, \sigma,
 \tau)\,,
\end{equation}
with $F_{BS}(S, K, \sigma, \tau)$ explicitly computable in terms of
the {\em cumulative normal distribution} function $N$.
For the initial condition $v = h$, the expansion \eqref{eq.taylor.u}
becomes
\begin{equation}\label{eq.taylor}
 F_{SA}(S, K, \nu, \sigma, \rho, \tau) \, = \, F_{BS}(S, K, \sigma,
 \tau) + \nu F_1 + \nu^2 F_2 + \ldots + \nu^k F_k + O(\nu^{k+1}) \,.
\end{equation}
One of our main results is to provide a method to {\em explicitly
  compute} the coefficients $F_j$ in this expansion.  We thus obtain,
in principle, {\em closed form} approximations of the solution
$F_{SA}$.  We carry through the calculations to provide explicit,
closed-form formulas for $F_1$ and $F_2$.  This leads to an explicit,
closed-form approximation of $u = F_{SA}$ with error of the order
$O(\nu\sp{3})$.  See also the next subsection for a more precise and
complete description of our results.

Unlike the usual iterative numerical methods, our method based on
closed form solutions has a limited precision and hence a limited
range of applications (small $\nu$ and not too large $\tau$). However,
the type of explicit, closed form formulas that we obtained using our
method are favored in financial applications, where a great precision
is not needed, but it is important to have very fast, easy to
implement methods. In those applications, $F_{SA}(S, K, \nu, \sigma,
\rho, \tau)$ represents the no-arbitrage forward price of a European
Call Option with strike $K$, volatility $\sigma$, and time to expiry
$\tau$ in the SABR model. Nevertheless, a suitable modification of our
method can also be used to obtain approximations of the solution with
{\em an arbitrary predetermined precision} and {\em for any initial
  data.}  This uses an approximation of the Green function of the
$\lambda$SABR PDE that is of high order in time in combination with
the bootstrap procedure used in \cite{Wen1, CCMN}. The bootstrap
procedure is, however, iterative, it relies on numerical integration
and time discretization, which increase its cost, and is no longer
explicit. See Remarks \ref{rem.kernel} and \ref{rem.bootstrap}. The
numerical integration leads us to Fredholm integral operators, see
also \cite{Gatheral12}.

While, from a mathematical point of view, the SABR model {\em without
  mean reversion} is well understood since it fits into standard
theories, this is not the case once one includes the mean-reverting
term. For this reason, our results are less complete in the
mean-reverting case. For instance, when $\kappa = 0$, we do not need
boundary conditions at $\sigma = 0$, but when $\kappa > 0$, this is
not so clear. This paper was first circulated in 2014 as an SSRN
Preprint and is based on the second named author's 2012 Master Thesis
at Pennsylvania State University (under the supervision of
A. Mazzucato and V. Nistor) \cite{XiaoThesis}. In fact, the whole
project has started while all the authors were still at Pennsylvania
State University, whose continued support we are glad to acknowledge.


\subsection*{The method, history, and the main results}
Let us now explain our method, state our main results, and briefly
discuss how they compare with earlier works. As explained above, we
shall consider the $\lambda$SABR PDE, defined in Equation
\eqref{eq.lambdaSABR}, and our goal is to approximate its solution
$u$, in general, and the solution $u = F_{SA}(S, K, \nu, \sigma, \rho,
t$ for the initial data $u(0) = h = |e\sp{x} - K|_{+}$, in particular.
(From now on we shall use $t$ instead of $\tau$ in the $\lambda$SABR
PDE.)

To set up some notation and concepts, let us consider the general
evolution equation
\begin{equation}\label{eq.IVgen}
  \ \partial_t u = L u \,,\quad
  \ u(0) = v \,. 
\end{equation}
We shall formally write the solution of this equation as $u(t) =
e^{tL}v$.  Assume that $L = L_0 + V$. A typical result, expressing
$e^{tL}v$ in terms of $L_0$, $e^{sL_0}$, and $V$, is the following
{\em Duhamel-Dyson perturbative series expansion:}
\begin{equation}\label{eq.DuhamelDyson}
  e^{t L} v \seq I_0 + I_1 + \ldots + I_{n-1} + \maE_{n} \,,
\end{equation}
where\ $I_0 = e^{t L_0} v$,\ $I_1 = \int_{0}^{t} e^{(t-\tau)L_0}V
e^{(t-\tau)L_0}v \, d\tau$,\ and, in general,
\begin{equation}\label{eq.def.Ik}
I_k \ede \int_{0}^{t} \int_{0}^{\tau_{1}} \ldots \int_{0}^{\tau_{k-1}}
e^{(t-\tau_1)L_0}V e^{(\tau_1-\tau_2)L_0} \ldots V e^{(\tau_{k-1}
  -\tau_{k})L_0} V e^{\tau_{k}L_0} v \, d\tau \,,
\end{equation}
$d \tau = d\tau_1 \ldots d\tau_{k}$, $k < n$. A similar formula gives
the error term $\maE_n$, except that the last $L_0$ is replaced by
$L$, see Equation \eqref{eq.En}. This expansion was used also in
\cite{Les15} to study the distribution kernel of the SABR PDE.

The terms of the expansion in Equation \eqref{eq.DuhamelDyson} are
thus integrals of expressions of the form $e^{s_0 L_0} V e^{s_1 L_0} V
e^{s_2 L_0} \ldots e^{s_{n-1} L_0} V e^{s_n L_0}v$, where $s_0 + s_1 +
\ldots + s_n = t$, $s_j \ge 0$ (the exponentials and the $V$s
alternate). In general, we do not know any method to explicitly
compute integral of this form. However, if $L_0$ and $V$ generate a
finite dimensional Lie algebra, we can ``move'' all the exponentials
to one side of the formula such that the resulting integral is of a
form that can easily be computed. Concretely, this is achieved through
several applications of the Campbell-Hausdorff-Backer formula provided
by Theorem \ref{thm.finite.CHB}, as in \cite{Wen1, CCMN}. See
\cite{Siyan} for a form of this procedure that leads to solvable Lie
algebras (instead of nilpotent ones). In the specific case of the
$\lambda$SABR PDE, we obtain
\begin{equation}
  e^{(t-\tau_1)L_0}V e^{(\tau_1-\tau_2)L_0}V \ldots e^{(\tau_{k-1}
    -\tau_{k})L_0} V e^{\tau_{k}L_0} \seq e^{tL_0} P_k(t, \tau)\,,
\end{equation}
where $t \ge \tau_1 \ldots \ge \tau_k \ge 0$, $V$ and the exponentials
alternate, and $P_k(t, \tau)$ is {\em a polynomial} in $t$ and $\tau =
(\tau_1, \ldots, \tau_{k})$ with coefficients differential operators,
similar results, but taking $t$ as a small parameter were obtained in
\cite{Wen1, Wen2, CCMN, Pascucci17}. The integration over $\tau$ can
then be carried out in a straightforward manner since the
``exponential'' $e^{tL_0}$ (defined using the semi-group generated by
$L_0$) does not depend on $\tau$ anymore, and thus can be factored out
of the integral. The connection between the Duhamel-Dyson perturbative
series expansion and Lie algebra techniques was first noticed in
\cite{CCMN}.

Perturbative series expansions, in general, and the Duhamel-Dyson
perturbative series expansion, in particular, were used before in
\cite{Les15} for similar purposes in conjunction to the explicit
formulas for the heat kernel of the Laplacian on the hyperbolic plane
to study the SABR PDE. In general, perturbative expansions were very
extensively studied in physical applications. They were used in
\cite{Wen1, Wen2, CCMN, Pascucci17}, using the ``time'' $t$ as a small
parameter, to compute {\em in general} integrals of the form $I_k$ up
to an error of order $t^\infty$ (that is, to an arbitrary order in
$t$). An important progress in this direction was achieved in
\cite{Pascucci17}, where {\em complete} explicit calculations for the
SABR PDE were obtained (taking time as a small parameter). As that
paper shows, this is a very tedious, albeit elementary calculation. It
also undescores the need to find alternative perturbations using other
``small parameters,'' in particular, our calculations for the
$\lambda$SABR PDE are less tedious. See also \cite{Pascucci12,
  PascucciAnal, PascucciETF} for related calculations for the small
time asymptotics. Going further back in time, one needs to mention,
among many other contributions, the works of Henry-Labordere
\cite{Labordere05, Labordere07, LabordereBook} who used Riemannian
geometry heat kernel approximations, the works of Gatheral and his
collaborators who used heat kernel asymptotics to study the implied
volatility, and the works of Lesniewski and his collaborators
\cite{Les03, Les15, Les17}, who introduced and studied the SABR model,
the work of Fouque, Papanicolaou, Sircar, and Solna \cite{ FPSS03a,
  FPSS03b, FPSS11}, who studied stochastic volatility models and
singular perturbation techniques in option pricing, see also
\cite{FPSBook00}. For many of these authors, the motivation was to
study stochastic volatility models, which are discussed below in more
detail, and whose importance is underscored also by our results.

The explicit calculation of the terms $e^{tL_0} P_k(t, \tau)h$, albeit
elementary, becomes more and more tedious as $k$ growths. Because of
this, we only compute explicitly the second order approximation (in
$\nu$) of $F_{SA}$. That amounts to express $V$ as a multiple of $\nu$
and to collect the terms with the same powers of $\nu$ up to order
two, the remaining terms being included in the error. For the
coefficients of $\nu$ and $\nu^2$, we obtain complete, {\em closed
  form} (i.e. fully explicit) expression that we subsequently test
numerically. (The first term, more precisely, the coefficient of
$\nu^0$, is given by the Black-Scholes formula, so there is no
additional work to be done.)  Moreover, for the initial value $v = h =
|e^x -K|_+$, the calculation of $e^{tL_0} P_k(t, \tau)h$ greatly
simplifies, since $\pa_\sigma h = 0$.

Here is now an example of the type of results that we obtain. Let
\begin{equation}\label{eq.def.dpm}
	d_{\pm} \ede \frac{\ln (Se^{rt}/K)}{\sigma \sqrt{t}} \pm
        \frac{\sigma \sqrt{t}}{2} \seq \frac{ x - \ln K}{\sigma
          \sqrt{t}} \pm \frac{\sigma \sqrt{t}}{2}
\end{equation}
be the usual terms appearing in the Black-Scholes formula and $N$ be
the normal cumulative distribution function. Hence $N'(x) =
\frac{1}{\sqrt{2\pi}} e^{-x^2/2}$. Then, for the term $F_1$ appearing
in Equation \eqref{eq.taylor}, we obtain
\begin{equation}\label{eq.C1}
	F_1(S, K, \sigma, t) \seq \frac{K t}{2} \big (\kappa_0 (\theta
        - \sigma) \sqrt{t} - \rho \sigma d_{-} \big ) N'(d_{-})\, .
\end{equation}
The formula for $F_2$ is significantly more complicated, as seen in
Theorem \ref{theorem.main1}.

In practice, one is often interested in the ``implied volatility,''
or, more precisely, in the ``SABR model implied volatility.''  The
{\em SABR model implied volatility} $\sigma_{imp}$ is defined by
$F_{SA}(S, K, \nu, \sigma, \rho, t) \, = \, F_{BS}(S, K, \sigma_{imp},
t)$. (See also Equation \eqref{eq.imp.vol2} for the ``Black-Scholes
market implied volatility.'') The expansion \eqref{eq.taylor} then
yields an expansion in $\nu$ for the SABR model implied volatility,
namely,
\begin{equation}\label{eq.imp.vol}
   \sigma_{imp} \seq \sigma + \nu e_1 + \nu^2 e_2 + O(\nu^3) \,.
\end{equation}
The coefficients $e_1$ and $e_2$ are obtained by a tedious, but
elementary calculation. For example,
\begin{equation}\label{eq.formula.e1}
    e_1 \seq - \rho d_{-} \sigma \sqrt{t}/2
\end{equation}
if $\kappa = 0$.  See Theorem \ref{theorem.main3} for the more
complicated formula for $e_2$. We also extend our approach to
approximate $\pa_S F_{SA}(S, K, \nu, \sigma, \rho, \tau)$, which is
important in practice, see Theorem \ref{theorem.main2}.

We thus obtain {\em two second order approximations} of $F_{SA}(S, K,
\nu, \sigma, \rho, t)$, namely,
\begin{equation}\label{eq.SABR.2}
    F_{SA,2}(S, K, \nu, \sigma, \rho, t) \, = \, F_{BS}(S, K, \sigma,
    t) + \nu F_1(S, K, \sigma, t) + \nu^2 F_2(S, K, \sigma, t)\,,
\end{equation}
obtained by truncating Equation \eqref{eq.taylor} and
\begin{equation}\label{eq.imp.2}
   F_{D}(S, K, \nu, \sigma, \rho, t) \, = \, F_{BS}(S, K, \sigma + \nu
   e_1 + \nu^2 e_2, t)\,,
\end{equation}
obtained by truncating the formula for $\sigma_{imp}$ to second order.
To numerically test our results, we compare these two approximations
with approximations obtained via other methods.  Thus, we first
compare our two approximation formulas with equation (2.17) in
\cite{Les03}, which yields a well known approximation of $F_{SA}$,
often used in practice. Our approximation agrees almost exactly with
Hagan's approximation when $t$ or $\nu$ are small. In fact, the power
series approximation in $y = \ln(F/K)$ of the first order coefficient
of $\nu$ in Hagan's formula coincides with our $e_1$ term. When $t$ is
taken one year, both approximation are very accurate, the largest
difference between the simulated price is 0.8 cents, or around
0.1--0.2\% of the option price, which is well within the bid-ask
spread. As $t$ increases to 10 years, the largest error increases to
1.5\%. However, note that the option also is more expensive for a
longer time to maturity, so the percentage (or relative) error is
still reasonable.
We have also compare our two approximate formulas with Hagan's formula
and with Monte-Carlo and Finite Difference approximations of the
option price across a range of strikes and maturities. Again, the
results are good, at least for $t \le 1$. See Sections
\ref{sec.market} and \ref{sec.num} for details. We note that in these
numerical tests, it is a challenge to perform an accurate enough test
using either Monte Carlo or Finite Differences, which shows the
importance of having alternative, faster methods. It is possible that
a modification of our method, combined with the results of
\cite{Wen1}, will allow us to extend the range of $\beta$.

\subsection*{Practical motivation}
The methods and results in this paper can be properly understood only
when put into the perspective of their applications. Let us say a few
words about this, without entering into unnecessary details.

Options and other derivatives have long been used in financial
applications. They were mathematically rigorously priced for the first
time by the famous Black-Scholes model \cite{BS}, under the assumption
that the underlying asset (for example, a stock) follows a log-normal
distribution, or geometric Brownian motion.  While the use of the
Black--Scholes model is widespread in practice, the Black-Scholes
model is known to produce option values significantly different from
the ones in the market.  The reasons for these differences are
studied, for instance, in Rubinstein's seminal paper
\cite{Rubinstein:85}. For us, the most important reason is that the
volatility of the underlying asset is non constant.

It is therefore increasingly common these days to consider models that
relax an assumption of the constant volatility of the underlying
asset. In practice, the time-varying volatility has lead to the
concept of {\em Black-Scholes market implied volatility} (sometimes,
simply, {\em implied volatility}) $\sigma_{imp}$, defined as the
volatility that would have to be used in the Black-Scholes formula to
recover (at a given time) the market price. That is,
\begin{equation}\label{eq.imp.vol2}
 C_{M}(S, K, t) \, = \, C_{BS}(S, K, \sigma_{imp}, t)\,,
\end{equation}
where $C_{M}(S, K, \tau)$ is given either by {\em market} prices or by
{\em another model} (such as the SABR model in our case). The forward
price $F$ is related to the actual price by the formula $F = e^{rt}
C$, where from now on, $t$ denotes the time to expiry. If the
geometric Brownian motion provided a perfect description of the
behavior of the underlying, then the Black-Scholes implied volatility
would be constant as a function of time to expiry $\tau$ and strike
$K$. However, the practice shows immediately that this is not the
case. See \cite{Gatheral:xx, GatheralStefanica17, Stefanica17} for
more information.

The fact that the volatility is not a constant in practice has
motivated the introduction of several other models. Among them, we are
interested in the {\em stochastic volatility models} (see
\cite{TyskSVol10, Les03, Heston:93, HW87}, for example). In stochastic
volatility models, the volatility is not only non-constant in time,
but is in fact a random variable with its own volatility, denoted here
$\nu$ and called {\em volatility of volatility} or {\em
  vol-of-vol}. This explains why the SABR PDE has two space variables
($x$ and $\sigma$ in our convention), unlike the Black-Scholes PDE,
which has only one space variable ($x$ in our convention).  One of the
disadvantages of stochastic volatility models, however, is their
greater computational cost.  It is important therefore to find good
closed form approximations of the solutions to option pricing in
stochastic volatility models. It is the purpose of this paper to
provide such an approximation, for small values of the {\em
  vol-of-vol} parameter $\nu$, by a Taylor-type expansion in $\nu$.
See \cite{FPSBook00, Pascucci17} for more on stochastic volatility
models and for further references to this topic.

We note that our results provide realistic estimates of the volatility
of volatility parameter $\nu$. They also show that that using
stochastic volatility improves the fit to the data, and, by further
including the mean reverting term, we get yet even better results. See
Subsection \ref{ssec.w.kappa} and, especially, Table \ref{table.3}.

\subsection*{Contents of the paper}
The paper is organized as follows. Section \ref{sec.DuhamelDyson}
contains the main results of the paper. In that section, we explain
our method in general and then we perform in detail the calculations
for the particular case of the $\lambda$SABR PDE, thus obtaining a
third order accurate in $\nu$ expansion of the solution. That amounts
to finding the exact formulas for $F_1$ and $F_2$ in the expansion
$F_{SA} = F_{BS} + \nu F_1 + \nu^2 F_2 + \ldots$. In this section, we
also provide an approximation for the distribution kernel of the
$\lambda$SABR PDE, which allows, in principle, to approximate its
solution to an arbitrary predetermined precision for any initial
data. One of the advantages of our method is that it allows us to
include the mean reverting term, so we keep it in our formulas for
most of our calculations. In Section \ref{sec.applications}, we use
extend our results by providing an asymptotic expansion for the
implied volatility $\sigma_{imp} = \sigma + \nu e_1 + \nu^2 e_2 +
\ldots$ obtained from the equation $F_{SA} = F_{BS}(\sigma_{imp})$.
We obtain a similar expansion for the $\Delta := \pa_S C$ hedging
parameter.  In Section \ref{sec.market} we use our two approximations
of the price, namely $F_{SA, 2} := F_{BS} + \nu F_1 + \nu^2 F_2 $ and
for $F_D := e^{rt}C_{BS}(\sigma + \nu e1 + \nu^2 e_2)$. We use these
formulas, as well as Hagan's formula to calibrate our model using
market data and to compare these models. Our approximations are seen
to be quite competitive in this regard. The results of the market
calibration also provide us with a realistic range of the parameters
for the numerical tests that we perform in Section \ref{sec.num}. In
that section, we perform several tests to see how well our results
approximate the exact solution of the $\lambda$SABR PDE. In
particular, we also compare our solutions to the one obtained using a
Finite Difference (FD) test. We discuss in detail the results of the
FD implementation and its challenges. In the last section, we outline
some possible extensions of our methods and results: a more precise
method to deal with the mean reverting factor and a method to treat
$\beta \neq 1$ in the original SABR PDE.

We thank Wen Cheng, Anna Mazzucato, Dan Pirjol, Camelia Pop, and Siyan
Zhang for useful discussions.


\section{The Duhamel-Dyson perturbative series expansion}
\label{sec.DuhamelDyson}

In this section, we introduce and explain in detail our method.  To
illustrate it, we perform some general calculations leading to the
second order approximation of the Green function of the $\lambda$SABR
PDE with initial condition $h$. This then leads to a determination of
the coefficients $F_1$ and $F_2$ in Equation \eqref{eq.taylor}, which
is our main approximation formula of the solution $F_{SA} = F_{BS} +
\nu F_1 + \nu^2 F_2 + \ldots$ of the $\lambda$SABR PDE.

As we are interested in this paper mainly in numerical methods for the
$\lambda$SABR PDE, we keep the theoretical aspects (including proofs),
to a minimum, trying to obtain as quickly as possible our formulas,
which we then test numerically.

\subsection{General results: The Duhamel and Campbell-Hausdorff-Backer formulas}
As in the Introduction, we cast our study of the SABR PDE as a
particular case of the problem of approximating general evolution
equations \eqref{eq.IVgen} (i.e. of the form $\pa_t u - Lu = 0$, $u(0)
= v$). In this subsection, we recall some general results pertaining
to the Equation \eqref{eq.IVgen}, including Duhamel's formula and its
generalization, the Duhamel-Dyson perturbative series expansion, as
well as the Campbell-Hausdorff-Backer formula. Beginning with the next
subsection, we specialize to the case of the $\lambda$SABR PDE.

The operator $L$ appearing in Equation \eqref{eq.IVgen} is called the
{\em parabolic generator} (of the given PDE or of the semi-group
$e^{tL}$).  Whenever the solution of the evolution problem
\eqref{eq.IVgen} exists, is unique, and depends continuously on the
initial data (in a suitable functions space $V$), we shall say that
our problem is {\em well posed} (in Hadamard's sense). In that case,
we shall write $u(t) = e^{tL}h$, where $e^{tL} : V \to V$ is a
continuous, linear map such that $e^{tL}v$ depends continuously on $v
\in V$ (i.e. $e^{tL}$ is a $c_0$-semi-group \cite{AmannBook, PazyBook,
  TucsnakBook}).  We shall often use this notation in a formal way,
that is, even if it was not fully justified mathematically.

Assume that $L = L_0 + V$ in our evolution equation, Equation
\eqref{eq.IVgen}, and that $u(0) = v$, as before. Assume that both
$L_0$ and $L$ generate $c_0$-semi-groups.  We then use $u(t) = e^{tL}
v$ and $\pa_t u - L_0 u = Vu$ to write {\em Duhamel's formula} in the
form
\begin{equation}\label{eq.Duhamel}
  e^{tL} v \seq e^{tL_0} v + \int_0^t e^{(t-\tau)L_0} V e^{\tau L} v
  \, d\tau\,.
\end{equation}
By substituting Duhamel's formula for $u(\tau)$, that is, for $t$
replaced by $\tau$, back in the last integral of the original
Duhamel's formula and then iterating this procedure $(n-1)$-times, we
obtain Equations \eqref{eq.DuhamelDyson} and \eqref{eq.def.Ik} of the
Introduction. The error term $\maE_n$ in Equation
\eqref{eq.DuhamelDyson} is the last integral of the resulting formula
and is similarly given by
\begin{equation}\label{eq.En}
\maE_n \ede \int_{0}^{t} \int_{0}^{\tau_{1}} \ldots
\int_{0}^{\tau_{n-1}} e^{(t-\tau_1)L_0}V e^{(\tau_1-\tau_2)L_0} \ldots
e^{(\tau_{n-1} -\tau_{n})L_0} V e^{\tau_{n}L} v \, d\tau \,.
\end{equation}
Notice that in the last exponential we have $L$ instead of $L_0$. If
we were to iterate one more time to obtain the Duhamel-Dyson formula
for $n+1$, we would substitute Duhamels's formula \eqref{eq.Duhamel}
applied to $e^{\tau_{n}L} v$ in the integral defining $\maE_n$.

In general, integrals of the kind defining $I_k$ are notoriously hard
to compute, and a lot of work has been devoted to understanding
them. However, if $L_0$ and $V$ generate a finite dimensional Lie
algebra, one can compute these integrals by establishing a
``Campbell-Hausdorff-Backer'' formula (CHB) \cite{Wen1, CCMN,
  Siyan}. The idea is to use the CHB formula to shift all the
exponential operators $e^{tL_0}$ to the left in our cases of interest.
This is what we are going to do in this paper as well. We note that in
our paper, the CHB formula reduces to a finite sum, so there are no
convergence issues.  It is possible to use the CHB formula also in
certain situations when the sum is not finite; see \cite{Siyan} for an
example.

Recall that $[T_1, T_2] := T_1T_2 - T_2T_1$ denotes the {\em
  commutator} of $T_1$ and $T_2$ and $ad_T(T_1) = [T, T_1]$. Clearly,
We are ready to state now one of our main technical ingredients, the
Campbell-Hausdorff-Backer (CHB) formula (see \cite{Wen1, CCMN, Siyan}
for a proof in a greater generality).

\begin{theorem}\label{thm.finite.CHB}
Let $L_0$ and $V$ be operators such that $\adj_{L_0}^{m+1}(V) = 0$ for
some $m$. For any $\theta > 0$, define
\begin{equation*}
  P_m(L_0, V; \theta) \ede \sum_{k=0}^m \frac{\theta^k}{k
    !}\adj^k_{L_0}(V) \seq L + \theta[L_0, V] +
  \frac{\theta^2}{2}[L_0, [L_0, V]] + \ldots .
\end{equation*}
Then
\begin{equation*}
  e^{\theta L_0} V \, = \, P_m(L_0, V; \theta) e^{\theta L_0},
\end{equation*}
and, similarly,
\begin{equation*}
  V e^{\theta L_0} \, = \, e^{\theta L_0} P_m(L_0, V; -\theta).
\end{equation*}
\end{theorem}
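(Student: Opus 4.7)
The plan is to reduce the theorem to a conjugation calculation: introduce $f(\theta) \ede e^{\theta L_0} V e^{-\theta L_0}$, show that $f$ satisfies a first-order linear ODE in $\theta$, solve that ODE, and then use the nilpotency hypothesis $\adj_{L_0}^{m+1}(V) = 0$ to truncate the resulting exponential series to the finite polynomial $P_m$.

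First I would differentiate $f$ with respect to $\theta$. Since $\frac{d}{d\theta} e^{\theta L_0} = L_0 e^{\theta L_0} = e^{\theta L_0} L_0$, the product rule gives
\begin{equation*}
  \frac{df}{d\theta} \seq L_0 f(\theta) - f(\theta) L_0 \seq \adj_{L_0}(f(\theta)) \,,
\end{equation*}
together with the initial condition $f(0) = V$. The unique solution of this linear operator-valued ODE is
\begin{equation*}
  f(\theta) \seq e^{\theta \adj_{L_0}}(V) \seq \sum_{k=0}^{\infty} \frac{\theta^k}{k!}\adj^k_{L_0}(V) \,.
\end{equation*}
Under the assumption $\adj_{L_0}^{m+1}(V) = 0$, every summand with $k \ge m+1$ vanishes and the series collapses to $P_m(L_0, V; \theta)$. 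Multiplying the resulting identity $e^{\theta L_0} V e^{-\theta L_0} = P_m(L_0, V; \theta)$ on the right by $e^{\theta L_0}$ yields the first stated formula. The second formula follows by first multiplying on the left by $e^{-\theta L_0}$, which gives $V e^{-\theta L_0} = e^{-\theta L_0} P_m(L_0, V; \theta)$, and then substituting $\theta \mapsto -\theta$.

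The main obstacle is making the $\theta$-differentiation of $e^{\theta L_0}$ rigorous when $L_0$ is an unbounded differential operator, as in the $\lambda$SABR setting. I would handle this by testing the identity against a dense class of sufficiently smooth functions $\varphi$ lying in the common domain of every iterated commutator $\adj_{L_0}^{k}(V)$ (for $k \le m$), where the $c_0$-semigroup $e^{\theta L_0}$ is strongly differentiable, and carrying out the computation inside that space. An alternative route, which sidesteps any serious analytic issue, is to observe that once the ODE argument has told us the answer is a polynomial of degree at most $m$ in $\theta$, the identity $e^{\theta L_0} V = P_m(L_0, V; \theta) e^{\theta L_0}$ can be verified by induction on $m$: the base case $m = 0$ reduces to $[L_0, V] = 0$, and the inductive step follows by applying the $m=1$ case (which requires only one differentiation of the semigroup in $\theta$) to the operator $\adj_{L_0}^{m}(V)$, which commutes with $L_0$ by hypothesis. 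Either way, the technical details are those already carried out in \cite{Wen1, CCMN, Siyan}, which I would cite rather than repeat.
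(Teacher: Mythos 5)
Your conjugation/ODE argument---setting $f(\theta)=e^{\theta L_0}Ve^{-\theta L_0}$, deriving $f'(\theta)=\adj_{L_0}(f(\theta))$ with $f(0)=V$, and truncating $e^{\theta \adj_{L_0}}(V)$ to $P_m(L_0,V;\theta)$ via the nilpotency hypothesis---is correct and is exactly the standard proof of this Hadamard-type commutation lemma. The paper gives no proof of its own, deferring to \cite{Wen1, CCMN, Siyan}, and the argument in those references is essentially the one you describe, including the same device of verifying the identity on a dense domain of smooth functions to handle the unboundedness of $L_0$.
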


Let $\RR[\sigma, \pa_\sigma, \pa_x]$ be the algebra of differential
operators in $\pa_\sigma$ and $\pa_x$ with coefficients polynomials in
$\sigma$.  The CHB formula allows us to compute the terms $I_k$ of
Equation \eqref{eq.def.Ik} in the Duhamel-Dyson perturbative series
expansion, Equation \eqref{eq.DuhamelDyson}.  Indeed, a direct
calculation gives the following corollary.

\begin{corollary} \label{cor.Ik.polynomial}
With the assumptions of Theorem \ref{thm.finite.CHB}, we have 
\begin{multline*}
  e^{(t-\tau_1)L_0}V e^{(\tau_1-\tau_2)L_0} \ldots e^{(\tau_{k-1}
    -\tau_{k})L_0} V e^{\tau_{k}L}\\
   = e^{tL_0} P_m(L_0, V; -\tau_{1}) \ldots P_m(L_0, V;
   -\tau_{k-1})P_m(L_0, V; -\tau_k)
\end{multline*}
and hence $I_k = e^{tL_0} p_k(t)$, where $p_k(t) \in \RR[\sigma,
  \pa_\sigma, \pa_x]$ depends polynomialy on $t$, $L_0$, and $V$.
\end{corollary}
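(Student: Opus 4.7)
The plan is to establish the first displayed identity by iterated application of the second form of the CHB formula from Theorem \ref{thm.finite.CHB}, namely $V e^{\theta L_0} \seq e^{\theta L_0} P_m(L_0, V; -\theta)$, and then to deduce the claimed structure of $I_k$ by factoring $e^{tL_0}$ out of the integral and integrating a polynomial over a simplex.

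First I would push each $V$ to the right past the exponential immediately adjacent to it, proceeding from the innermost (rightmost) $V$ outward. Applied to the last block $V e^{\tau_k L_0}$, the CHB identity produces $e^{\tau_k L_0} P_m(L_0, V; -\tau_k)$. The newly freed $e^{\tau_k L_0}$ combines with the neighboring $e^{(\tau_{k-1}-\tau_k)L_0}$ to give $e^{\tau_{k-1}L_0}$, and the next $V$ to the left now sits against this block and can be moved by the same CHB step. Iterating $k$ times, the adjacent exponentials telescope and all exponents add up to $t$, yielding
\[
e^{tL_0}\, P_m(L_0, V; -\tau_1) \cdots P_m(L_0, V; -\tau_{k-1}) P_m(L_0, V; -\tau_k).
\]
A short induction on $k$ formalizes this.

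Substituting into the definition \eqref{eq.def.Ik} of $I_k$, the factor $e^{tL_0}$ is independent of $\tau_1, \ldots, \tau_k$ and pulls out of the iterated integral, leaving
\[
I_k \seq e^{tL_0}\left(\int_{0}^{t}\!\!\int_{0}^{\tau_1}\!\!\cdots\!\int_{0}^{\tau_{k-1}} \prod_{j=1}^{k} P_m(L_0,V;-\tau_j)\, d\tau\right) v.
\]
Each factor $P_m(L_0, V; -\tau_j) \seq \sum_{i=0}^m \frac{(-\tau_j)^i}{i!}\adj_{L_0}^i(V)$ is a polynomial in $\tau_j$ whose coefficients are iterated commutators of $L_0$ with $V$. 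Since $\RR[\sigma, \pa_\sigma, \pa_x]$ is a Lie subalgebra of the algebra of differential operators and both $L_0$ and $V$ lie in it, these coefficients stay in $\RR[\sigma, \pa_\sigma, \pa_x]$. The integrand is therefore a polynomial in $(\tau_1, \ldots, \tau_k)$ with coefficients in that algebra, and integrating a monomial $\tau_1^{a_1}\cdots \tau_k^{a_k}$ over the simplex $\{0 \le \tau_k \le \cdots \le \tau_1 \le t\}$ yields a scalar monomial in $t$. Hence the bracketed integral is a polynomial $p_k(t) \in \RR[\sigma, \pa_\sigma, \pa_x][t]$, as asserted.

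The only delicate point is preserving operator order when the $V$'s are transported past the exponentials, since $L_0$ and $V$ do not commute; the CHB identity encodes precisely this noncommutativity through the $\adj_{L_0}^i(V)$ terms, and the hypothesis $\adj_{L_0}^{m+1}(V) = 0$ guarantees that $P_m$ is a finite sum, so no convergence question arises. Everything else is routine bookkeeping, and the telescoping of exponents is the main structural observation that makes the formula clean.
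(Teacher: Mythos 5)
Your proof is correct and follows exactly the route the paper intends (the paper only says ``a direct calculation gives the following corollary''): iterate the identity $V e^{\theta L_0} = e^{\theta L_0} P_m(L_0,V;-\theta)$ from the rightmost $V$ outward so the exponentials telescope to $e^{tL_0}$, then pull $e^{tL_0}$ out of the simplex integral of the remaining polynomial. Note only that the $e^{\tau_k L}$ in the displayed left-hand side must be read as $e^{\tau_k L_0}$, consistent with Equation \eqref{eq.def.Ik}, as you implicitly do.
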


Similarly, by moving the exponentials to the right, 
we obtain that $I_k = q_k(t) e^{tL_0}$, where 
$q_k(t) \in \RR[\sigma, \pa_\sigma, \pa_x]$
depends polynomialy on $t$, $L_0$, and $V$. 

\subsection{Commutator calculations for SABR} 
We would like to use the CHB formula (Theorem \ref{thm.finite.CHB})
and its Corollary \ref{cor.Ik.polynomial} to identify explicitly the
terms $I_k$ in the Duhamel-Dyson perturbative series expansion for the
$\lambda$SABR PDE, namely in Equation \eqref{eq.DuhamelDyson}.  To
this end, we need to introduce the decomposition $L = L_0 + V$ and
compute the relevant commutators in order to see that we are in
position to use the CHB formula.

Let us introduce the differential operators:
\begin{equation}\label{eq.the.operators}
\begin{gathered}
  L_0 \, = \, \frac{1}{2} \sigma^2 (\partial^2_x - \partial_x)\,, \ \
  L_1 \, = \, \rho \sigma^2\partial_x\partial_\sigma \, + \,
  \kappa_0(\theta - \sigma)\partial_\sigma \,, \ \
  L_2 \, = \, \frac{1}{2}\sigma^2\partial^2_\sigma\,, \ \ \mbox{and}
  \\
 L \ede L_0 + \nu L_1 + \nu^2 L_2 \,.
\end{gathered}
\end{equation}
In this way, the $\lambda$SABR PDE (Equation \eqref{eq.lambdaSABR})
becomes a particular case of our general evolution equation, Equation
\eqref{eq.IVgen}, for $L \ede L_0 + \nu L_1 + \nu^2 L_2$. We can then
use the general results of the previous subsection, for $L = L_0 + V$,
where $V := \nu L_1 + \nu^2 L_2$.  The notation for the differential
operators $L_0$, $L_1$, $L_2$, $L$, and $V$ will remain fixed
throughout the paper.

The following lemma shows that, in the case of the $\lambda$SABR PDE,
we are in position to use the CHB formula an hence in position to {\em compute
  explicitly the terms in Duhamel-Dyson perturbative series expansion
  for $L_0$ and $V$}:

\begin{lemma} Let $b := \pa_x(\pa_x-1)$. Then \label{lemma.commutators}
$ \adj_{L_0}(L_1) = - \sigma [ \rho \sigma^2 \pa_x + \kappa_0 (\theta
    - \sigma)] b $ and $\adj^j_{L_0}(L_1) = 0$, if $j>1$. Similarly, $
  \adj_{L_0}(L_2) = -\frac{1}{2}(\sigma^2 + 2
  \sigma^3\partial_\sigma)b$, \ $ \adj^2_{L_0}(L_2) = \sigma^4 b^2, $
  and $\adj^j_{L_0}(L_2)=0$ if $j>2$.
\end{lemma}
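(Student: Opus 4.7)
The key structural observation is that $b = \partial_x(\partial_x - 1)$ involves only $\partial_x$, so $b$ commutes with every function of $\sigma$ and with $\partial_\sigma$. In particular $L_0 = \tfrac12 \sigma^2 b$ factors as a $\sigma$-only piece times $b$, and any commutator of $L_0$ with another operator will reduce, via the Leibniz rule $[A,BC] = [A,B]C + B[A,C]$, to commutators of $\sigma^2$ with terms in $\partial_\sigma$ (producing lower-order $\sigma$ factors) multiplied by extra powers of $b$. This is the engine of the whole computation.

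First I would compute $[L_0, L_1]$. Split $L_1$ as $\rho \sigma^2 \partial_x \partial_\sigma + \kappa_0(\theta-\sigma)\partial_\sigma$. Since $\sigma^2 b$ commutes with $\sigma^2 \partial_x$ and with $\kappa_0(\theta-\sigma)$, we can pull these factors out of the commutator, and the remaining work is the single elementary computation $[\sigma^2 b, \partial_\sigma] = b[\sigma^2,\partial_\sigma] = -2\sigma b$. Collecting gives the stated formula for $\adj_{L_0}(L_1)$. The crucial observation for the next step is that $\adj_{L_0}(L_1)$ contains no $\partial_\sigma$; it is a polynomial in $\sigma, \partial_x$ times $b$, hence lies in the commutant of $L_0$. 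Therefore $\adj^2_{L_0}(L_1) = 0$, and a fortiori $\adj^j_{L_0}(L_1) = 0$ for $j>1$.

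Next I would compute $[L_0, L_2] = \tfrac14 [\sigma^2 b,\sigma^2\partial_\sigma^2]$. Since $\sigma^2$ commutes with $\sigma^2 b$, only the piece $\sigma^2 [\sigma^2 b,\partial_\sigma^2] = \sigma^2 b [\sigma^2,\partial_\sigma^2]$ survives, and expanding $[\sigma^2,\partial_\sigma^2]$ via Leibniz gives $-2 - 4\sigma\partial_\sigma$. This yields the claimed $\adj_{L_0}(L_2) = -\tfrac12(\sigma^2 + 2\sigma^3\partial_\sigma)b$. To get $\adj^2_{L_0}(L_2)$ I would then compute $[L_0, -\tfrac12 \sigma^2 b - \sigma^3 \partial_\sigma b]$. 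The first summand commutes with $L_0$ (it lies in the $\sigma, \partial_x$ subalgebra), so only the $\sigma^3\partial_\sigma b$ term contributes; since $b$ commutes with $L_0$ and with $\sigma^3\partial_\sigma$, I reduce to $[\sigma^2 b, \sigma^3\partial_\sigma]\,b$, and the inner commutator is just $b\,[\sigma^2, \sigma^3 \partial_\sigma] = -2\sigma^4 b$ after a one-line Leibniz expansion. Putting things together gives $\adj^2_{L_0}(L_2) = \sigma^4 b^2$. Finally $\sigma^4 b^2$ involves only $\sigma$ and $\partial_x$, so it commutes with $L_0 = \tfrac12\sigma^2 b$, yielding $\adj^j_{L_0}(L_2) = 0$ for $j > 2$.

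There is no real obstacle: the entire proof is a bookkeeping exercise, and the only thing to watch is keeping the non-commuting pair $(\sigma, \partial_\sigma)$ in the right order when applying the Leibniz rule. The structural reason the chain terminates quickly is transparent: each application of $\adj_{L_0}$ either lowers the total $\partial_\sigma$-degree by one (via a $[\sigma^2,\partial_\sigma]$-type commutation) or returns zero; since $L_1$ has $\partial_\sigma$-degree $1$ and $L_2$ has $\partial_\sigma$-degree $2$, the series terminates at orders $2$ and $3$ respectively, which is exactly what the lemma asserts.
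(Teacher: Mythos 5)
Your proof is correct: I checked each commutator, namely $[L_0,L_1]=-\sigma[\rho\sigma^2\partial_x+\kappa_0(\theta-\sigma)]b$, $[L_0,L_2]=-\tfrac12(\sigma^2+2\sigma^3\partial_\sigma)b$ and $\operatorname{ad}^2_{L_0}(L_2)=\sigma^4b^2$, as well as the vanishing statements, and your degree-counting explanation (each application of $\operatorname{ad}_{L_0}$ lowers the $\partial_\sigma$-degree by one, and anything in $\RR[\sigma,\partial_x]\cdot b$ commutes with $L_0=\tfrac12\sigma^2b$) is sound. The paper states this lemma without proof, and your direct Leibniz-rule computation is precisely the elementary calculation the authors intend, so your write-up simply supplies the omitted verification.
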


In particular, $\adj^j_{L_0}(V) = 0$ if $j > 2$, and hence we can use
the CHB formula.

\subsection{Expansion in the powers of $\nu$}
Let us now expand the terms $I_k$ in the Duhamel-Dyson perturbative
series expansion in powers of $\nu$, see Equation
\eqref{eq.DuhamelDyson}. Then we use the CHB formula of Theorem
\ref{thm.finite.CHB} and its corollary, Corollary
\ref{cor.Ik.polynomial} (whose use was justified by Lemma
\ref{lemma.commutators}), to obtain

\begin{proposition}\label{prop.expansion}
We have
\begin{equation}
  e^{tL} \seq e^{tL_0} + \nu B_1 + \nu^2 B_2 + \ldots \nu^{n-1} B_{n-1} + 
  \nu^{n} \maR_n\,,
\end{equation}
where $B_j = P_j e^{tL_0} = e^{tL_0}Q_j$, $j < n$, with $P_j, Q_j \in
\RR[\sigma, \pa_\sigma, \pa_x]$it. Moreover, $\maR_n = P e^{tL_0} +
\maE_n = e^{tL_0} Q + \maE_n$, with $\maE_n$ as in Equation
\eqref{eq.En}.
\end{proposition}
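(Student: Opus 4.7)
The plan is to apply the Duhamel-Dyson perturbative series expansion from Equations \eqref{eq.DuhamelDyson}--\eqref{eq.En} with $n$ iterations, and then reorganize the resulting finite sum by powers of $\nu$. Since $V = \nu L_1 + \nu^2 L_2$, each $I_k$ contains exactly $k$ occurrences of $V$, so expanding each $V$ as $\nu L_1 + \nu^2 L_2$ inside the integrand turns $I_k$ into a sum of $2^k$ terms, each of which carries an explicit factor $\nu^{k+\ell}$, where $0 \le \ell \le k$ counts the number of $L_2$ factors among the $k$ inserted operators. In particular, $I_k$ is a polynomial in $\nu$ whose lowest power is $\nu^k$, and $\maE_n = O(\nu^n)$ since it also contains $n$ copies of $V$.

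Next, I would apply Corollary \ref{cor.Ik.polynomial} (whose hypothesis $\adj_{L_0}^{m+1}(V) = 0$ is verified by Lemma \ref{lemma.commutators} with $m = 2$) separately to each of the $2^k$ summands of $I_k$. By the CHB formula, each such integrand becomes $e^{tL_0}$ composed with a product of operators of the form $P_m(L_0, L_i; -\tau_j)$. These products are polynomials in $\tau_1, \ldots, \tau_k$ with coefficients in $\RR[\sigma, \pa_\sigma, \pa_x]$, so integration over the simplex $0 \le \tau_k \le \cdots \le \tau_1 \le t$ produces $e^{tL_0} q$ with $q \in \RR[\sigma, \pa_\sigma, \pa_x]$ depending polynomially on $t$. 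Moving all exponentials to the right instead of the left yields the analogous expression $q' e^{tL_0}$, as noted immediately after Corollary \ref{cor.Ik.polynomial}.

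Collecting: for each $1 \le j \le n-1$, gathering all contributions of exact $\nu$-degree $j$ from the various $I_k$ with $\lceil j/2 \rceil \le k \le \min(j, n-1)$ produces $\nu^j B_j$ with $B_j = P_j e^{tL_0} = e^{tL_0} Q_j$, the two equivalent forms coming from the two ways of moving exponentials in the previous step. The remainder $\nu^n \maR_n$ then splits into two pieces: on the one hand, the contributions of $\nu$-degree $\ge n$ coming from the $I_k$ with $k \le n-1$, which again by Corollary \ref{cor.Ik.polynomial} are of the form $P e^{tL_0}$ (respectively $e^{tL_0} Q$) with $P, Q \in \RR[\sigma, \pa_\sigma, \pa_x]$; and on the other hand, the error term $\maE_n$, which is already $O(\nu^n)$. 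After factoring out $\nu^n$, this yields the stated decomposition $\maR_n = P e^{tL_0} + \maE_n = e^{tL_0} Q + \maE_n$.

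The main obstacle is notational rather than conceptual: one has to carry out the combinatorial bookkeeping of how the total $\nu$-degree $j$ distributes across the $k$ factors of $V$ inside each $I_k$, and then verify that the resulting integrated coefficients genuinely lie in $\RR[\sigma, \pa_\sigma, \pa_x]$. The latter follows from Lemma \ref{lemma.commutators} (which ensures that all iterated commutators of $L_0$ with $L_1, L_2$ remain in this ring) together with the elementary fact that integrating a polynomial in $\tau_1, \ldots, \tau_k$ over the standard simplex produces a polynomial in $t$. Everything else is a routine rearrangement of a finite sum.
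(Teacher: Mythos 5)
Your proposal is correct and follows essentially the same route as the paper: the paper obtains Proposition \ref{prop.expansion} precisely by expanding each $I_k$ of the Duhamel--Dyson series in powers of $\nu$ via $V=\nu L_1+\nu^2 L_2$ and invoking Corollary \ref{cor.Ik.polynomial} (justified by Lemma \ref{lemma.commutators}), which is exactly your argument with the combinatorial bookkeeping made explicit. Your identification of which $I_k$ contribute to a given $\nu$-degree $j$ (namely $\lceil j/2\rceil\le k\le\min(j,n-1)$) is consistent with the paper's instance \eqref{eq.nu.series.J}, where $B_1=J_1$ and $B_2=J_2+I(L_1,L_1)$.
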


We now turn to the calculation of the terms $B_j$, $j = 1, 2$.  The
goal is to ultimately obtain exact formulas for the terms $F_1= B_1 h$
and $F_2 = B_2 h$ in the expansion $F_{SA} = F_{BS} + \nu F_1 + \nu^2
F_2 + \ldots$ of Equation \eqref{eq.taylor}.

In view of Equations \eqref{eq.DuhamelDyson} and \eqref{eq.def.Ik} and
of the relation $V = \nu L_1 + \nu L_2$, let us introduce the notation
\begin{equation}\label{eq.def.Is}
\begin{gathered}
  J_1 \ede \int_0^te^{(t-\tau_1)L_0} L_1 e^{\tau_1L_0} \, d\tau_1 \,,
  \\
  J_2 \ede \int_0^te^{(t-\tau_1)L_0} L_2 e^{\tau_1L_0} \, d\tau_1 \,,
  \mbox{ and }\\
  I(T_1, T_2) \ede \int_0^t\int_0^{\tau_1} e^{(t-\tau_1)L_0} T_1
  e^{(\tau_1-\tau_2)L_0} T_2 e^{\tau_2L_0} \, d\tau_2\, d\tau_1
\end{gathered}
\end{equation}

The integrals above are defined as in \cite{CCMN, Siyan}. Also, let us
consider also the ``error term''
\begin{multline*}
 \maR_2 \, = \, I(L_1, L_2) + I(L_2, L_1) + \nu I(L_2, L_2) \\
 + \int_0^t \int_0^{\tau_1} \int_0^{\tau_2} e^{(t-\tau_1)L_0} V
 e^{(\tau_1-\tau_2)L_0} V e^{(\tau_2 - \tau_3)L_0} V e^{\tau_3L} \,
 d\tau_3 d\tau_2 d\tau_1.
\end{multline*}
(We notice again that the last exponential in the last formula is
different from the other ones, with $L$ being used instead of $L_0$.)
By collecting the terms with like powers of $\nu$ in the Duhamel-Dyson
perturbative series expansion for $L = L_0 + V$, where $L_0$, $V$ and
all the other differential operators are as introduced in Equation
\eqref{eq.the.operators}, we see that the expansion of Propositon
\ref{prop.expansion} becomes
\begin{equation}\label{eq.nu.series.J}
  e^{tL} \seq e^{tL_0} + \nu J_1 + \nu^2 (J_2 + I(L_1, L_1)) + \nu^3
  \maR_2 \, .
\end{equation}
That is,
\begin{equation}\label{eq.form.Bs}
   B_1 \seq J_1 \quad \mbox{ and } \quad B_2 \seq J_2 + I(L_1, L_1)
   \,.
\end{equation}
Note that Equation \eqref{eq.nu.series.J} is an exact formula -- not
just an approximation -- whenever the exponentials (or semi-groups)
are defined. Also, it is an equality of operators and hence
\begin{equation}\label{eq.exp.Bh}
  e^{tL}v
  \seq e^{tL_0}v + \nu J_1v + \nu^2 (J_2v + I(L_1, L_1)) + \nu^3
  \maR_2 v \,,
\end{equation}
which we shall mainly use for $v(x) = h(x) := |e^x - K|_+$.

\subsection{Symbolic formulas for $B_j$, $j = 1, 2$.}
We now turn to a symbolic calculation of the terms $B_j$, $j = 1, 2$,
of Proposition \ref{prop.expansion}.  In view of Equation
\eqref{eq.form.Bs}, in order to compute $B_1$ and $B_2$, we need to
compute $J_1$, $J_2$, and $I(L_1, L_1)$.  To this end, we use the CHB
formula to successively move the exponentials to the left in the
integrals defining $J_1$, $J_2$, and $I(L_1, L_1)$. First, the
definitions of the integral $J_1$ in Equation \eqref{eq.def.Is} gives
\begin{equation}\label{eq.Duhamel.I1}
  J_1 \seq \int_0^t e^{tL_0}(L_1 -\tau \adj_{L_0}(L_1))\, d\tau\\ \seq
  e^{tL_0}\big ( tL_1 - \frac{t^2}{2} \adj_{L_0}(L_1) \big ) \,.
\end{equation}
Similarly, 
\begin{multline}\label{eq.Duhamel.I2}
  J_2 \seq \int_0^t e^{tL_0}(L_2 -\tau \adj_{L_0}(L_2) +
  \frac{\tau^2}{2} \adj_{L_0}^2(L_2))\\
  \seq e^{tL_0} \big (tL_2 - \frac{t^2}{2} \adj_{L_0}(L_2) +
  \frac{t^3}{6} \adj^2_{L_0}(L_2) \big )\, .
%
\end{multline}
Finally, the definition of the integral $I$ in Equation
\eqref{eq.def.Is} gives
\begin{equation}\label{eq.Duhamel.I}
\begin{gathered}
   I(L_1, L_1) \seq \int_0^t \int_0^{\tau_1} e^{(t-\tau_1) L_0} L_1
   e^{-\tau_1 L_0} (L_1 -\tau_2 \adj_{L_0}(L_1)) \, d\tau_2 d\tau_1 \\
   = \, \int_0^t \int_0^{\tau_1} e^{tL_0} (L_1 - \tau_1
   \adj_{L_0}(L_1))(L_1 -\tau_2 \adj_{L_0}(L_1))\, d\tau_2d\tau_1 \\
   = \, e^{tL_0} \Big ( \, \frac{t^2}{2}L_1^2 - \frac{t^3}{3}
   \adj_{L_0}(L_1) L_1 - \frac{t^3}{6} L_1\adj_{L_0}(L_1)+
   \frac{t^4}{8}(\adj_{L_0}(L_1))^2 \, \Big ) \,.
\end{gathered}
\end{equation}

This completes the symbolic calculations of $B_1$ and $B_2$, that is,
in terms of the exponentials $e^{tL_0}$ and the commutators of $L_j$.

\begin{remark}\label{rem.really}
Note that we obtained formulas for which all the exponentials of the
form $e^{tL_0}$ have been shifted to the right and formulas for which
they have all been shifted to the left.  The two forms are both
needed, but they serve different purposes.  The formulas that we have
obtained are explicit enough to allow for exact calculations.  In
particular, the coefficients $B_1 = J_1$ and $B_2 = J_2 + I(L_1, L_1)$
of Equation \eqref{eq.nu.series.J} can be written in the form $B_j =
P_j e^{tL_0}$ and $B_j = e^{tL_0}Q_j$, with $P_j, Q_j \in \RR[\sigma,
  \pa_\sigma, \pa_x]$, $j = 1, 2$, a fact that will be exploited
shortly.  The same is true for the other terms in the expansion ($j
\ge 3$).
\end{remark}

\begin{remark}\label{rem.wp} Let us notice that $e^{tL_0}$
is defined (since it corresponds to a family of heat equations).  In
particular,
\begin{equation*}
  e^{tL_0} \sigma^i \pa_x^j \seq \sigma^i \pa_x^j e^{tL_0} \,.
\end{equation*}
For $e^{tL}$, the situation is more complicated. Let $M := \RR \times
[0, \infty) \ni (x, \sigma)$. In case $\kappa_0 = 0$, that is, if
there is no mean reverting term, then $L$ is generated by the
derivatives $\sigma \pa_x$ and $\sigma \pa_\sigma$ together with
smooth, totally bounded functions (or coefficients) in $\sigma$ and
$x$. This type of differential operators were studied in
\cite{sobolev, aln1} in the framework of differential operators on
``Lie manifolds.'' In our case, $M$ is a Lie manifold for with the
structure generated by the standard compactification of $M$ to a disk
using hyperbolic coordinates. The geometry is that of the hyperbolic
plane, something noticed also in \cite{Les15}. Since Lie manifolds are
of bounded geometry \cite{sobolev}, the theory developed in
\cite{MazzucatoNistor1} and, more recently by Herbert Amann
\cite{AmannParab2, AmannMaxReg} shows that the equation
$\eqref{eq.IVgen}$ is well-posed in the Sobolev spaces associated to
this geometry, that is, in spaces of the form
\begin{equation*}
 H^m(M; g) \ede \{ f \in L^2(M) \, \vert \ \int_{M}
 \sigma^{i+j-1}\pa_x^i \pa_\sigma^j u\, dx d\sigma < \infty\,,\ i+j
 \le m\, \}\ .
\end{equation*}
This well-posedness result is, unfortunately, not known in the mean
reverting case. See however \cite{Les17, Siyan}. Because of this, our
treatment of the mean reverting case will be shorter and somewhat
formal. Analysis on this type of manifolds is related to that on
hyperbolic spaces \cite{Les15}. It is also related to the analysis of
edge singularities for PDEs on polyhedral domains \cite{BNZ3D1, CDN12,
  daugeBook, FeehanPop15, HengguangThesis}.
\end{remark}

\begin{remark}\label{rem.kernel} We can, in principle, compute exactly the 
distribution kernels (or Green functions) of the operators $B_j$. They
are very closely related to the kernels of the operators
$e^{tL_0}$. We will do that for $B_1$, see Remark \ref{rem.bootstrap}.
These kernels can then be used to integrate against {\em any initial
  data} $v$ to obtain, by truncating in the expansion in the small
parameter, a semi-discretization (i.e. a discretization only in time)
of our PDE. This semi-discretization can be turned into a full
discretization via an approximation of $v$ by functions in a finite
dimensional space and then by using numerical integration. This
approximation is very good for small time. It can be turned out into
an arbitrary precision approximation for {\em any time} using the
bootstrap method in \cite{Wen1, CCMN}.  See Remark \ref{rem.bootstrap}
for more details.
\end{remark}


\subsection{Heat kernels and convolutions}
\label{sec.int.ker}
The Green functions of the operators $B_j$ discussed in Remark
\ref{rem.kernel}, while explicit in principle, are given by pretty
complicated formulas. The calculation of $B_jh$, however, can further
be simplified using the specific properties of $h$ and $e^{tL_0}$.
Let us define
\begin{equation}\label{eq.def.phi_tau}
	\phi_{t}(x, \sigma) \seq \frac{1}{\sigma \sqrt{2\pi t}} \, e^{
          - \frac 12\big ( \frac{x}{\sigma \sqrt{t}} - \frac{\sigma
            \sqrt{t}}2 \big )^2 } \ .
\end{equation}

Then it is well known that
\begin{equation}\label{eq.fsol}
  e^{t L_0}(x,y) \seq \phi_t(x-y, \sigma) \,,
\end{equation}
in the sense that $e^{t L_{0}} f(x) = \int_{\RR}\, e^{t L_0}(x,y) f(y)
dy$ and $u(t, x, \sigma) := e^{t L_{0}} f(x)$ satisfies the partial
differential equation $\pa_t u(t) - L_0 u(t) = 0$ with initial
condition $u(0) = f$.  (That is, $e^{t L_0}(x,y)$ is the Green
function of $ \pa_t - L_0$.) In particular, this gives $e^{tL_0}h$,
which will be needed in the final formula, as follows.  Let $N(x) :=
(2 \pi)^{-1/2} \int_{-\infty}^x e^{-t^2/2} dt$ be the normal
cumulative function. Then
\begin{multline}\label{eq.eLh}
  e^{t L_0} h(x, \sigma)
  \seq \int_{\ln K}^\infty \phi_t(x-y, \sigma) (e^y - K) dy
  \seq
  e^x N ( d_+) - K N ( d_- )\, ,
\end{multline}
where we used the notation of Equation \eqref{eq.def.dpm}.

This suggests to consider {\em integral kernel operators} $T_k$, where
$k(x, y)$ is a suitable measurable function and $T_kf(x) := \int_{\RR}
k(x, y)f(y)dy$. Then $k$ is called the {\em distribution kernel} of
$T_k$.  If $k(x, y) = \phi(x-y)$, then we shall write also $C_{\phi} =
T_{k}$ and we shall refer to $C_{\phi}$ as {\em the operator of
  convolution with $\phi$.} So
\begin{equation}
	C_{\phi} f(x) \seq \int_{\RR} \phi(x-y) f(y) dy.
\end{equation}
In particular, $ e^{ t L_0}(x,y)$ is the operator of convolution with
$\phi_t$:
\begin{equation}\label{eq.phi.exp}
	 e^{t L_0}(x,y) \seq C_{\phi_t} \, .
\end{equation}
All our convolution operators will be convolutions in the $x$ variable,
but they will often depend on $\sigma$ as a parameter. The parameter
$\sigma$ in the notation of the convolution will sometimes be omitted,
as in the last equation.

Let also $h(x) = |e^x - K|_{+} = (e^x - K)^+$ be the usual pay-off of
a European call option, Equation \eqref{eq.def.h}. 
Recall that a function $\phi : \RR \to \CC$ is said to be {\em of Schwartz 
class} if $x^k \pa_x^j \phi(x)$ is bounded for all $k, j \ge 0$. If $\phi$ is 
of Schwartz class, then $\phi$ and all its derivatives are integrable.
We have the following lemma, which underscores the important role 
played by the polynomial $b = \pa_x^2 - \pa_x$.

\begin{lemma}\label{lemma.kernel}
Assume $\phi$ is function of Schwartz class. Then
\begin{equation}
 \pa_x C_{\phi}  \seq C_\phi \pa_x = C_{\phi'} \,,
\end{equation}
in the sense that $(C_{\phi} \psi)' = C_\phi(\psi') =
C_{\phi'}(\psi)$, for $\psi$ of Schwartz class. If, moreover, there
exist two constants $C > 0$ and $\epsilon > 1$ such that
$|\phi^{(k)}(x)| \le C e^{-\epsilon |x|}$ for $k=0, 1, 2$, then
\begin{equation*} 
  b C_{\phi} h (x) \seq K \phi(x - \ln K) \,.
\end{equation*}  
In particular, if $P \in \RR[\sigma, \pa_x]$, the operators of the
form $P e^{tL_0}$ are convolution operators in the $x$ variable.
\end{lemma}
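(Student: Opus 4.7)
The plan is to establish the three parts of the lemma in order, with the distributional computation of $b h$ being the core content.

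For the first identity, I would differentiate under the integral sign. Since $\phi$ is of Schwartz class, the pointwise derivative $\phi'(x-y)\psi(y)$ is dominated by an integrable function uniformly in $x$ on compact sets, so $\pa_x (C_\phi \psi)(x) = \int_\RR \phi'(x-y)\psi(y)\,dy = C_{\phi'}\psi(x)$. The equality $C_\phi(\psi') = C_{\phi'}\psi$ then follows by integration by parts in $y$: the chain rule $\pa_y \phi(x-y) = -\phi'(x-y)$ supplies the sign, and the boundary terms at $\pm \infty$ vanish by the Schwartz decay of $\phi$ and $\psi$.

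The heart of the lemma is the second identity. I would first compute $bh$ in the sense of distributions: since $h(y) = (e^y - K)^+$ is continuous, with no jump at $y=\ln K$, its distributional derivative is simply $h'(y) = e^y \mathbf{1}_{y > \ln K}$, with no boundary delta. Taking a further distributional derivative introduces a jump of size $e^{\ln K} = K$ at $y = \ln K$, giving $h''(y) = e^y \mathbf{1}_{y > \ln K} + K\,\delta_{\ln K}$. Subtracting yields $bh = K\,\delta_{\ln K}$, so formally $bC_\phi h(x) = C_\phi(bh)(x) = K\phi(x - \ln K)$. To make this rigorous I would transfer the $x$-derivatives onto $\phi$ using part one of the lemma, obtaining $C_{\phi'' - \phi'}h(x)$, and then move them back onto $h$ by two integrations by parts in $y$.

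The main obstacle is justifying those integrations by parts, since $h$ grows like $e^y$ as $y \to +\infty$ and is not Schwartz. This is precisely where the hypothesis $|\phi^{(k)}(x)| \le C e^{-\epsilon|x|}$ with $\epsilon > 1$ enters: it guarantees that $\phi^{(k)}(x-y)h(y)$ decays integrably at $+\infty$ (the $-\infty$ side is trivial since $h$ vanishes there), so the infinite boundary terms drop out. The only surviving boundary contribution is the jump of $h'$ at $y = \ln K$, whose size $K$ combines with $\phi(x-\ln K)$ to produce the stated right-hand side; one also checks that the contribution from the jump of $h$ itself is zero, which is guaranteed by the continuity of $h$.

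For the third assertion, I would write $P$ as a finite sum of monomials $\sigma^i \pa_x^j$. Each factor $\sigma^i$ acts as multiplication by a parameter from the point of view of the $x$-convolution, and each $\pa_x^j$ commutes with $e^{tL_0} = C_{\phi_t}$ by iterating part one. Consequently $P e^{tL_0} f = C_{P\phi_t} f$, where $P\phi_t$ denotes the application of $P$ to $\phi_t(\cdot, \sigma)$ in the $x$ variable with $\sigma$ treated as a parameter, exhibiting $Pe^{tL_0}$ as a convolution operator in $x$ as claimed.
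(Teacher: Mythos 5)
Your proposal is correct and follows essentially the same route as the paper, which proves the lemma ``by direct calculation'' and explicitly notes that the second identity rests on the distributional formula $bh = (\pa_x^2-\pa_x)h = K\,\delta_{\ln K}$ --- precisely the computation at the heart of your argument. Your additional care in justifying the integrations by parts against the exponentially growing $h$ (which is exactly why the hypothesis $\epsilon>1$ appears) fills in details the paper leaves to the reader, but introduces no new idea.
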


Note that $P e^{tL_0} = e^{tL_0}P$ if $P \in \RR[\sigma, \pa_x]$.
More precisely, we obtain that $P e^{tL_0} = e^{tL_0}P$ is a {\em
  family parametrized by $\sigma$} of convolution operators in the $x$
variable.

\begin{proof}
The proof is by direct calculation. For the convenience of the reader,
we note that the second relation follows from the first relation and
from $bh = (\pa_x^2 - \pa_x) h = K \delta_{\ln K}$, where $\delta_{y}$
is the Dirac distribution concentrated at $y$.
\end{proof}

The following remark explains the idea of the last step of the
calculation yielding the terms $F_1= B_1 h$ and $F_2 = B_2 h$ in the
expansion $F_{SA} = F_{BS} + \nu F_1 + \nu^2 F_2 + \ldots$ of Equation
\eqref{eq.taylor}.

\begin{remark}\label{rem.convolution}
Let us first notice that
\begin{equation}\label{eq.deriv.phi_tau}
	\pa_x^n \phi_{t}(x) \, = \, \tilde H_n(x) \, \phi_{t}(x)\,,
\end{equation}
for some polynomials $\tilde H_n(x)$, $g \ge 0$, whose coefficients
are functions of $\sigma$. Let $H_n$ be the Hermite polynomials and
\begin{equation}\label{eq.def.ell}
	\ell(x) \ede \frac{x}{\sigma \sqrt{t}} - \frac{\sigma
          \sqrt{t}}2 \,.
\end{equation}
Then $\tilde H_n = (- \frac{1}{\sigma \sqrt{t}})^n H_n(\ell(x))$. See
Remark \ref{rem.Hermite} for more details.  Let $B_j = e^{tL_0} Q_j$,
as in Proposition \ref{prop.expansion}, and let $b f= \pa_x^2 -
\pa_x$, as before.  Then $Q_j = \sum_i G_{ji} \pa_\sigma^i$ (a finite
sum), with $G_{ji} \in \RR[\sigma, \pa_x]$. We will see that $G_{j0} =
\tilde G_j b$, with $\tilde G_j = \sum_i a_{ji} \pa_x^i$, where
$a_{ji} = a_{ji}(\sigma)$ are polynomials in $\sigma$. Recall also
that $\pa_x$, $\sigma$, and $e^{tL_0}$ commute, and hence $e^{tL_0}$
and $G_{j0}$ commute.  Lemma \ref{lemma.kernel} then simplifies the
calculation of $B_j h$, $j = 1, 2$, as follows:
\begin{multline}\label{eq.convolution}
 B_j h (x)\seq e^{tL_0} \big (\sum_i G_{ji} \pa_\sigma^i \big ) h (x)
 \seq e^{tL_0} G_{j0} h (x) \seq G_{j0} e^{tL_0} h (x) \\
 \seq b \tilde G_j e^{tL_0} h (x) \seq b C_{\tilde G_{j} \phi_t} h
 \seq K \tilde G_{j} \phi_t (x - \ln K, \sigma)\\
 \seq \sum_i a_{ji}(\sigma) \tilde H_i(x - \ln K) \phi_t(x - \ln K,
 \sigma) \,.
\end{multline}
In order to complete our calculation, all that is thus left to do is
to find the differential operators $\tilde G_j$, $j = 1, 2$, or,
equivalently, the polynomials $a_{ji}(\sigma)$. We also notice that
$\pa_x (\psi(x - a)) = (\pa \psi)(x -a)$, so there is no danger of
confusion when writing $\tilde G_{j} \phi_t (x - \ln K, \sigma)$.
\end{remark}

The last remark explains also why, if we are interested in evaluating
terms of the form $B_j h$, it is more convenient to keep the
exponential to the left of the formula (to get rid of the
$\pa_\sigma$).

This discussion allows us to identify the distribution kernel of
$B_1$.

\begin{remark}\label{rem.bootstrap} Recall that
$B_1 = J_1 = e^{tL_0}\big ( tL_1 - \frac{t^2}{2} \adj_{L_0}(L_1)
  \big)$, which combines the formula for $J_1$, Equation
  \eqref{eq.def.Is} with the identification \eqref{eq.form.Bs}. Then
\begin{multline*}
 B_1 \seq e^{tL_0}\Big ( t \big ( \rho
 \sigma^2\partial_x\partial_\sigma \, + \, \kappa_0(\theta -
 \sigma)\partial_\sigma \big ) \ + \ \frac{t^2\sigma }{2} [ \rho
   \sigma^2 \pa_x + \kappa_0 (\theta - \sigma)] (\pa_x^2 - \pa_x) \Big
 )\\
 \seq t \big ( \rho \sigma^2\partial_x + \kappa_0(\theta - \sigma)
 \big ) e^{tL_0} \pa_\sigma \, + \, \frac{t^2\sigma }{2} [ \rho
   \sigma^2 \pa_x + \kappa_0 (\theta - \sigma)] (\pa_x^2 - \pa_x)
 e^{tL_0}\\
 \seq t \kappa_0(\theta - \sigma) C_{\phi_t} \pa_\sigma \, + \, t\rho
 \sigma^2 C_{\tilde H_1 \phi_t} \pa_\sigma \, - \,
 \frac{t^2 \kappa_0 \sigma (\theta - \sigma)}{2}  C_{\tilde H_1 \phi_t}\\
 \, + \, \frac{t^2\sigma}{2} [ \kappa_0 (\theta - \sigma) - \rho
   \sigma^2 ] C_{\tilde H_2 \phi_t} \, + \, \frac{t^2\sigma^3\rho}{2}
 C_{\tilde H_3 \phi_t} \,.\\
\end{multline*}
This gives $B_1 v = C_{\psi_1} \pa_\sigma v + C_{\psi_2}v$.  Then the
solution of any initial value problem with $u(0) = v$ can be
approximated by $u\approx D_t v$,
\begin{equation*}
  D_tv(x, \sigma) \ede \int \Big [ \big (\phi_t(x - y, \sigma) +
    \psi_1(x -y, \sigma) \big ) v(y, \sigma)\, + \, \psi_2(x- y,
    \sigma) \pa_\sigma v(y, \sigma)\, \Big ] \, dy \, .
\end{equation*}
Of course, a better estimate will be obtained if one includes also the
$B_2$ term, which is, however, much more complicated. One can perform
a bootstrap, that is, divide the interval $[0, T]$ into $N$
subintervals, and solve on each interval the corresponding initial
problem approximately. That is
\begin{equation*}
  u((k+1)T/N) \approx D_{T/N} u(kT/N)\,.
\end{equation*}
This method works well once one establishes an error $\| u(t) - D_t v
\| = O(t^a)$ with $a > 1$.  See \cite{Wen1, CCMN}.
\end{remark}

\subsection{Calculation of the differential operators $\tilde G_j$}
We now pursue in detail the method outlined in Remark
\ref{rem.convolution} to compute explicitly closed form expressions
for $F_1 = B_1 h$ and $F_2 = B_2h$, where, we recall, $B_1 = J_1$ and
$B_2 = J_2 + I(L_1, L_1)$.  The differential operators $\tilde G_j \in
\RR[\sigma, \pa_x]$, $j = 1, 2$, considered in this subsection are as
defined in that remark and thus satisfy $B_j h (x) = K \tilde G_{j}
\phi_t (x - \ln K, \sigma)$.

We continue to use the notation $b = \pa_x^2 - \pa_x$, for
simplicity. Recall also from Lemma \ref{lemma.commutators} that
$\adj_{L_0}(L_1) = - \sigma [\rho \sigma^2 \pa_x + \kappa_0 (\theta -
  \sigma)] b$. Lemma \ref{lemma.kernel} and Equation
\ref{eq.Duhamel.I1} then give (since $t=0$)
\begin{equation*}
  B_1h \seq J_1h \seq e^{tL_0}\big ( tL_1 - \frac{t^2}{2}
  \adj_{L_0}(L_1) \big ) h \seq - \frac{t^2}{2} e^{tL_0}
  \adj_{L_0}(L_1) h\,.
\end{equation*}
Hence $\tilde G_1 =: \frac{t^2 \sigma}{2} [\rho \sigma^2 \pa_x +
  \kappa_0 (\theta - \sigma) ] = a_{10} + a_{11} \pa_x $ and
\begin{multline}\label{eq.final.B1}
  F_1(t, x, \sigma) \seq B_1h (t, x, \sigma) \seq K \tilde G_1
  \phi_t(x - \ln K, \sigma)\\
    \seq K \big ( a_{10} + a_{11} \tilde H_1(x - \ln K) \big )
    \phi_t(x - \ln K, \sigma) \,,
\end{multline}
which yields right away the formula \eqref{eq.C1} in the Introduction.

The second term is similar, but the calculation requires more work.
We shall use that $L_1 h = L_2 h = 0$. We first calculate the
intermediate term
\begin{multline}
  L_1 [L_0, L_1]h \seq - (\rho \sigma^2\partial_x\partial_\sigma +
  \kappa_0(\theta - \sigma)\partial_\sigma)\sigma [\rho \sigma^2 \pa_x
    + \kappa_0 (\theta - \sigma)] bh\\
      \seq - (\rho \sigma^2\partial_x + \kappa_0(\theta - \sigma)) [ 3
        \rho \sigma^2 \pa_x + \kappa_0 (\theta - 2\sigma)] bh \\
  \seq - \big [ \rho^2 \sigma^4\partial_x^2 + \kappa_0 \rho \sigma^2
    (4\theta - 5\sigma)\pa_x + \kappa_0 (\theta - \sigma)(\theta -
    2\sigma) \big ] bh \,.
\end{multline}
We then use this calculation together with $[L_0, L_1]^2 h \seq
\sigma^2 [\rho \sigma^2 \pa_x + \kappa_0 (\theta - \sigma)]^2 b^2 h$,
the formulas \eqref{eq.Duhamel.I2} and \eqref{eq.Duhamel.I}, the
relation $\pa_\sigma h = 0$, and the fact that $\pa_\sigma$ commutes
with $\pa_x$, to obtain
\begin{multline*}
  B_2h \seq (J_2 + I(L_1, L_1)) h \seq e^{tL_0} \big (tL_2 -
  \frac{t^2}{2} \adj_{L_0}(L_2) + \frac{t^3}{6} \adj^2_{L_0}(L_2) \big
  )h \\
  + e^{tL_0} \Big ( \, \frac{t^2}{2}L_1^2 - \frac{t^3}{3}
  \adj_{L_0}(L_1) L_1 - \frac{t^3}{6} L_1\adj_{L_0}(L_1)+
  \frac{t^4}{8}(\adj_{L_0}(L_1))^2 \, \Big )h \\
   \seq e^{tL_0} \Big (- \frac{t^2}{2} \adj_{L_0}(L_2) + \frac{t^3}{6}
   \adj^2_{L_0}(L_2)
  - \frac{t^3}{6} L_1\adj_{L_0}(L_1)+ \frac{t^4}{8}(\adj_{L_0}(L_1))^2
  \, \Big )h \\
  \seq e^{tL_0} \Big (\, \frac{t^2\sigma^2}{4} + \frac{t^3 \sigma^4}{6}b
    + \frac{t^3}{6} \big [ 3\rho^2 \sigma^4\partial_x^2 + 
  \kappa_0 \rho \sigma^2 (4\theta - 5\sigma)\pa_x + \kappa_0^2 (\theta - \sigma)(\theta - 2\sigma) \big ]\\
   + \frac{t^4\sigma^2}{8} [\rho^2 \sigma^4 \pa_x^2 + 2 \rho \sigma^2
     \kappa_0 (\theta - \sigma)\pa_x + \kappa_0^2 (\theta - \sigma)^2]
   b \, \Big )b h \,.
\end{multline*}

Let
\begin{align*}
    a_{20} & \seq \frac{t^2\sigma^2}{4} +
    \frac{t^3\kappa_0^2}{6}(\theta - \sigma)(\theta - 2\sigma)\\
    a_{21} & \seq -\frac{t^3\sigma^4}{6} + \frac{t^3 \kappa_0\rho
      \sigma^2 }{6} (4\theta - 5\sigma)
    -\frac{t^4\kappa_0^2\sigma^2}{8}(\theta - \sigma)^2\\
   a_{22} & \seq \frac{t^3\sigma^4}{6} + \frac{t^3\rho^2\sigma^4}{2} +
   \frac{t^4\kappa_0^2\sigma^2}{8}(\theta - \sigma)^2 -
   \frac{t^4\kappa_0 \rho \sigma^4}{4}(\theta - \sigma)\\
   a_{23} & \seq \frac{t^4\kappa_0 \rho \sigma^4}{4}(\theta - \sigma)
   - \frac{t^4 \rho^2 \sigma^6}{8}\\
   a_{24} & \seq \frac{t^4 \rho^2 \sigma^6}{8}\,.
\end{align*}
Then $\tilde G_2 = \sum a_{2i} \pa_x^i$ and, finally,
\begin{multline}\label{eq.final.B2}
  F_2(t, x, \sigma) \seq B_2h (t, x, \sigma) \seq K \tilde G_2
  \phi_t(x - \ln K, \sigma)\\
    \seq K\, \sum_{j=0}^4\, a_{2i} \tilde H_i(x - \ln K) \phi_t(x -
    \ln K, \sigma)\,.
\end{multline}

Putting together our calculations, we obtain the following result.
(Recall the notation of Equation \eqref{eq.taylor}, but see also
Equation \eqref{eq.SABR.2}.)

\begin{theorem}\label{theorem.main1}
The formal second order approximation of solution $F$ of the
$\lambda$SABR PDE is
\begin{equation}\label{eq.option.price1}
  	F_{SA,2}(S, K, \nu, \sigma, \rho, t) \, = \, e^{tL_0}h(t, x,
        \sigma) + \nu B_1 h(t, x, \sigma) + \nu^2 B_2 h(t, x, \sigma)
        \,,
\end{equation}
where $e^{t L_0} h$ is given by Equation \eqref{eq.eLh}, $B_1h = F_1$
and $B_2h = F_2$ are given explicitly by Equations \eqref{eq.final.B1}
and \eqref{eq.final.B2}, and $x = \ln(Se^{rt})$, as before.
\end{theorem}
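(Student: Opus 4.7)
The plan is to assemble the pieces that have already been developed in this section, since the proposition is essentially a synthesis of the Duhamel--Dyson expansion, the CHB formula, and the specific reduction afforded by the initial datum $h$. First, I would invoke Proposition \ref{prop.expansion} together with the identification \eqref{eq.form.Bs} to write $e^{tL} = e^{tL_0} + \nu B_1 + \nu^2 B_2 + \nu^3 \mathcal{R}_2$ with $B_1 = J_1$ and $B_2 = J_2 + I(L_1, L_1)$, and then set the initial data to $v = h$. Applying this operator identity to $h$ and truncating at order $\nu^2$ gives the stated formula in principle; what remains is to verify that $B_1 h$ and $B_2 h$ coincide with the closed-form expressions \eqref{eq.final.B1} and \eqref{eq.final.B2}.

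For $B_1 h = J_1 h$, I would use the explicit formula \eqref{eq.Duhamel.I1}, namely $J_1 = e^{tL_0}(tL_1 - \tfrac{t^2}{2} \adj_{L_0}(L_1))$. Since $\partial_\sigma h = 0$, the term $tL_1 h$ vanishes (both $\rho \sigma^2 \partial_x \partial_\sigma$ and $\kappa_0(\theta - \sigma)\partial_\sigma$ kill $h$), leaving only $-\tfrac{t^2}{2} e^{tL_0} \adj_{L_0}(L_1) h$. Substituting the commutator computed in Lemma \ref{lemma.commutators}, $\adj_{L_0}(L_1) = -\sigma[\rho \sigma^2 \partial_x + \kappa_0(\theta - \sigma)] b$, and invoking Remark \ref{rem.convolution} with Lemma \ref{lemma.kernel} (which turns $b e^{tL_0}h$ into the convolution kernel evaluated at $x - \ln K$, using $bh = K \delta_{\ln K}$), produces $F_1 = K \tilde G_1 \phi_t(x-\ln K, \sigma)$ with $\tilde G_1 = \tfrac{t^2 \sigma}{2}[\rho \sigma^2 \partial_x + \kappa_0(\theta - \sigma)]$, which is \eqref{eq.final.B1}.

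For $B_2 h$, the analogous strategy applies but with more bookkeeping. I would start from $(J_2 + I(L_1, L_1))h$ using formulas \eqref{eq.Duhamel.I2} and \eqref{eq.Duhamel.I}, immediately discarding the terms that end in $L_2$ or $L_1$ acting on $h$ (again since $\partial_\sigma h = 0$). What survives is $e^{tL_0}(-\tfrac{t^2}{2}\adj_{L_0}(L_2) + \tfrac{t^3}{6}\adj^2_{L_0}(L_2) - \tfrac{t^3}{6} L_1 \adj_{L_0}(L_1) + \tfrac{t^4}{8}(\adj_{L_0}(L_1))^2) h$. Substituting the commutators from Lemma \ref{lemma.commutators}, expanding the intermediate product $L_1 \adj_{L_0}(L_1) h$ (noting that after the inner $b$ produces a function independent of $\sigma$, the operator $L_1$ acts only via its $\partial_x$ part on the polynomial coefficient), and collecting powers of $\partial_x$ yields $B_2 h = e^{tL_0} \tilde G_2 b h$ for $\tilde G_2 = \sum_{i=0}^4 a_{2i} \partial_x^i$ with the $a_{2i}$ as listed, and then a final application of Lemma \ref{lemma.kernel} gives \eqref{eq.final.B2}.

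The main obstacle is purely the bookkeeping in the $B_2$ step: one must carefully expand $L_1 \adj_{L_0}(L_1)$ as a differential operator in $x$ with $\sigma$-dependent polynomial coefficients, track all cross-terms in $(\adj_{L_0}(L_1))^2 = \sigma^2[\rho \sigma^2 \partial_x + \kappa_0(\theta-\sigma)]^2 b$, and verify that after combining the $\tfrac{t^3}{6}$ and $\tfrac{t^4}{8}$ contributions the coefficients $a_{20}, \ldots, a_{24}$ emerge as stated. Nothing in this step is conceptually novel — it is the CHB normalization followed by the convolution identity — but the algebra must be organized so that the $\partial_\sigma$-killing is used as early as possible, otherwise the expressions grow unmanageably. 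Once the symbolic forms of $\tilde G_1$ and $\tilde G_2$ are pinned down, the final identity $B_j h = K \tilde G_j \phi_t(x - \ln K, \sigma)$ is just Remark \ref{rem.convolution} applied verbatim, completing the proof.
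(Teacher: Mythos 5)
Your proposal follows the paper's own route exactly: it assembles Proposition \ref{prop.expansion} with the identification \eqref{eq.form.Bs}, discards every term whose rightmost factor is $L_1$ or $L_2$ (since $\pa_\sigma h = 0$), substitutes the commutators of Lemma \ref{lemma.commutators}, and converts to closed form via Lemma \ref{lemma.kernel} and Remark \ref{rem.convolution} --- which is precisely how the paper arrives at Theorem \ref{theorem.main1}. One caution: your parenthetical claim that in $L_1\adj_{L_0}(L_1)h$ ``the operator $L_1$ acts only via its $\pa_x$ part on the polynomial coefficient'' is backwards --- every term of $L_1$ carries a $\pa_\sigma$, and it is this $\pa_\sigma$ that must differentiate the $\sigma$-dependent coefficient $\sigma[\rho\sigma^2\pa_x + \kappa_0(\theta-\sigma)]$ of $\adj_{L_0}(L_1)$ (because $bh = K\delta_{\ln K}$ is independent of $\sigma$), which is how the paper obtains the factor $3\rho\sigma^2\pa_x + \kappa_0(\theta-2\sigma)$; taken literally, your version would annihilate that term.
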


\begin{remark}
Since the initial data $h(x) = |e^x - K|_{+} = (e^x - K)^{+}$ is very
smooth in $\sigma$ -- in fact, even independent of $\sigma$ in our
stochastic volatility models -- the methods of \cite{CCMN} will give
that $\maR h$ is bounded (and hence that $\nu^3 \maR h$ is very small
for $\nu$ small). That is,
\begin{equation}\label{eq.nu3}
  	F_{SA}(S, K, \nu, \sigma, \rho, t) - F_{SA,2}(S, K, \nu,
        \sigma, \rho, t) \, = \, O(\nu^3) \,,
\end{equation}
A rigorous proof of this fact is beyond the scope of this paper,
however.
\end{remark}

Similar formulas for $\kappa = 0$ but for general $\beta$ were
obtained in \cite{Pascucci17}. They use a small time asymptotic
method, very similar to the one in \cite{CCMN}, which leads to much
longer formulas.

\subsection{Remarks on the implementation}

When implementing the formula for $F_{SA,2}(S, K, \nu, \sigma, \rho,
t)$, it will be convenient to take into account the following
remarks. For simplicity, we restrict to the case $\kappa_0 = 0$ from
now on.

\begin{remark} \label{rem.BS}
First of all, for our calculations, it was convenient to work with the
{\em forward prices}, usually denoted $F$ and decorated with various
indices.  In practice, however, one may need to use the actual prices,
denoted $C$ and decorated with the corresponding indices. They are
related by the formula $F = e^{rt}C$, which corresponds to the fact
that the price $C$ is given in ``today's'' currency, whereas the
forward price is quoted in using the value of the currency at the
expiration. Here, of course, $r$ is the interest rate and $t$ is the
time to expiration, as before. This is in agreement with our
calculations. Indeed, let $d_{\pm} := \frac{\ln S-\ln K + rt}{\sigma
  \sqrt{t}} \pm \frac{\sigma \sqrt{t}}{2} := \frac{\ln (F/K) }{\sigma
  \sqrt{t}} \pm \frac{\sigma \sqrt{t}}{2}$, as before, Equation
\eqref{eq.def.dpm}.  In particular, $\ell(x - \ln K) = d_{-}$. We
recall then that the Black-Scholes formula for the price $C_{BS}(S, K,
\sigma, t)$ of a call option with strike $K$, underlying $S$, and
volatility $\sigma$ is given by
\begin{equation}\label{eq.BS}
	C_{BS}(S, K, \sigma, t) = S N (d_{+} ) - e^{-rt} K N ( d_{-} )
        \,.
\end{equation}
Recall that $F_{BS} = e^{tL_0}h$, and hence the formula $F_{BS} =
e^{rt}C_{BS}$ is verified, by Equation \eqref{eq.eLh}, since $x - \ln
K= \ln (S e^{rt}/K)$.
\end{remark}

The term $\ln (S e^{rt}/K)$ is called the {\em log-moneyness}.

\begin{remark}\label{rem.Hermite} Recall that the 
Hermite polynomials $H_n(x)$ (the probabilist's version), are given by
\begin{equation}\label{eq.def.Hermite}
	H_n(x) \, : = \, (-1)^n e^{x^2/2} \pa_x^n e^{-x^2/2}\, \quad n
        \ge 0 \,.
\end{equation}
Of these, only $H_0(x) = 1$, $H_1(x) = x$, $H_2(x) = x^2 -1$, $H_3(x)
= x^3 - 3x$, and $H_4(x) = x^4 - 6x^2 + 3$ are needed for the
evaluation of $B_1$ and $B_2$.  If $\ell(x)$ is any linear function of
$x$ (so $\ell'$ is a constant), then
\begin{equation*}
	\pa_x^n e^{- \ell(x)^2/2} \, = \, (-\ell')^n H_n(\ell(x))\,
        e^{-\ell(x)^2/2} \,.
\end{equation*}
Let us fix $\ell(x) := \frac{x}{\sigma \sqrt{t}} - \frac{\sigma
  \sqrt{t}}2$ from now on, as in \eqref{eq.def.ell}, which gives
$\phi_t(x, \sigma) = c e^{-\ell(x)^2/2}$, $c \in \RR$, and hence we
have $\tilde H_n(x) := (-\ell')^n H_n(\ell(x)) $. Recall that $\phi_t$
is given by Equation \eqref{eq.def.phi_tau} (but see also Equation
\eqref{eq.phi.exp}) and the polynomials $\tilde H_n$ are as defined in
Equation \eqref{eq.deriv.phi_tau} (that is, $\pa_x^n \phi_{t}(x) =
\tilde H_n(x) \, \phi_{t}(x)$).  Furthermore, in implementation, it
will be useful to notice that $\ell(x - \ln K) = d_-$, that $\tilde
H_n(x - \ln K) = \big ( - \frac{1}{\sigma \sqrt{t}} \big )^nH_n(d_-)$,
and that $\phi_t(x - ln K, \sigma) = \frac{1}{\sigma \sqrt{2\pi t}}
e^{-d_-^2/2}.$ Recall that $d_-$ was defined in Equation
\eqref{eq.def.dpm}.
\end{remark}

\begin{remark}\label{rem.homogeneous}
Let $f(S, K)$ be an arbitrary function. We shall say that it is {\em
  homogeneous of degree one in $(S, K)$} if $f( \lambda S, \lambda K)
= \lambda f(S, K)$. If that is the case, the function $g = K^{-1} f$
can be written solely in terms of $y = \ln (Se^{rt}/K)$ (we assume $r$
and $t$ to be parameters in this discussion).  For instance, the
Black-Scholes pricing formula is homogeneous of degree one in $(S,
K)$. The form of the Equation \eqref{eq.lambdaSABR} tells us that the
forward price $F_{SA}$ of a European call in the SABR model is also
homogeneous of degree one in $(S, K)$. Since $C_{SA} := e^{-rt}
F_{SA}$, we have that $C_{SA}$ is also homogeneous of degree one in
$(S, K)$. We also see that $B_1h$ and $B_2h$ are homogeneous of degree
one in $(S, K)$ as well, which is consistent with the properties of
$F_{SA}$. In particular, $F_{SA,2}$ and $C_{SA} := e^{-rt} F_{SA}$ are
also homogeneous in $(S, K)$ of degree one.
\end{remark}

\begin{remark}\label{rem.RelBS}
Let $y := x - \ln K$ denote the log-moneyness. We have already noticed
that $\ell(y) = d_{-}$. Let
\begin{equation}\label{eq.def.C_rel}
  C_{rel}(y, \sigma, t) \, := \, K^{-1}e^{rt}C_{BS}(S, K, \sigma, t)
  \,,
\end{equation}
so that $C_{rel}$ depends directly only on $y$ (and only indirectly on
$S$ or $K$, as explained in Remark \ref{rem.homogeneous}). This will
be useful in implementation, where often we will be given $y$, but not
$S$ and $K$, and we will want to estimate relative errors of the form
 \begin{equation}
  \ln\, \frac{C_{BS}(S_1, K, \sigma, t)}{C_{BS}(S_2, K, \sigma, t)} =
  \ln\, \frac{C_{rel}(y_1,\sigma, t)}{C_{rel}(y_2 \sigma, t)}\,,
 \end{equation}
 with $y_j = ln(S_je^{rt}/K)$. Also for the purpose of implementation,
 let us notice that we have
 \begin{equation}
  C_{SA,2}(S, K, \nu, \sigma, \rho, t) \, = \, Ke^{-rt} \Big (
  C_{rel}(y, \sigma, t) + ( \nu \frac{\rho \sqrt{t}}{4\sqrt{2\pi}} +
  \nu^2 \Xi ) e^{- d_-^2/2} \Big )\,,
 \end{equation}
 where
 \begin{equation}
  \Xi = \frac{\sigma t^{3/2}}{24 \sqrt{2\pi}} \Big ( 6 + 4\sigma
  \sqrt{t}d_- + (12 \rho^2 + 4)(d_-^2 -1) + 3\rho^2 \sigma
  \sqrt{t}(d_-^3 - 3d_-) + 3\rho^2(d_-^4 - 6d_-^2 + 3) \Big ),
 \end{equation}
 which follows from Equations \eqref{eq.final.B1} and
 \eqref{eq.final.B2} and Theorem \ref{theorem.main1}.
\end{remark}

\section{Applications: approximations of the derivatives
  and of the implied volatility}
\label{sec.applications}

We now present two applications of the methods that we have
developed.

\subsection{Approximation of derivatives}
Our method is especially well suited for approximating the derivatives
of $C_{SA}$ (some of these derivatives are called ``Greeks'').
Indeed, the derivatives with respect to parameters (other than $\nu$)
of the solution $C_{SA}$ of Equation \eqref{eq.lambdaSABR} and Taylor
expansions in $\nu$ are seen to commute. We shall look in what follows
at $\pa_S C_{SA}$. The asymptotic expansion in $\nu$ for $\pa_S
C_{SA}$ thus can be obtained from the corresponding expansion of
$C_{SA}$ by taking the derivative term by term with respect to $S$ in
Equation \eqref{eq.taylor}.  Let us, for example, compute the
approximation $\Delta_{2} := \pa_S C_{SA,2}(S, K, \nu, \sigma,
t)$. This approximation can be obtained directly by differentiating in
Equation \eqref{eq.option.price1} with respect to $S = e^{x -rt}$ and
$\sigma$. When differentiating with respect to $S$, we also take into
account the fact that $\pa_S = S^{-1} \pa_x$, $\pa_S^2 = S^{-2}
(\pa_x^2 - \pa_x)$, and hence that $\pa_S$ and $\pa_S^2$ commute with
all the other differential operators (this is because all the other
differential operators are polynomials in $\sigma$, $\pa_\sigma$, and
$\pa_x$). We continue to assume that $\kappa_0 = 0$.

For instance, Equation \eqref{eq.final.B1} gives
\begin{equation}\label{eq.first.correction'}
  \pa_S B_1 h(t, x, \sigma) \seq
  \frac{K}{S} \pa_x \tilde G_1 \phi_{t}(x - \ln K) = \frac{K \rho
    \sqrt{t}}{2 S \sqrt{2\pi}} H_2 ( d_- ) e^{-d_-^2/2} \, .
\end{equation}
Similarly,
\begin{multline}\label{eq.psitwo'}
	\pa_x \tilde G_2 \phi_{t}(x) \seq \pa_x \Big [\,
          \frac{t^2\sigma^2}{4} + \frac{t^3 \rho^2 \sigma^4}{2}
          \pa_x^2 + \frac{t^3\sigma^4}{6} (\pa_x^2 - \pa_x) +
          \frac{t^4\rho^2\sigma^6}{8} (\pa_x^4 - \pa_x^3) \, \Big ]\,
        \phi_t \\
   = \, \Big [\, \frac{t^2\sigma^2}{4} \tilde H_1(x)
     -\frac{t^3\sigma^4}{6} \tilde H_2(x)
	+ \frac{\sigma^4 t^3 (3\rho^2 + 1 )}{6} \tilde H_3(x)
          + \frac{t^4\rho^2\sigma^6}{8} ( \tilde H_5(x) - \tilde
          H_5(x)) \, \Big ]\, \phi_t(x, \sigma) \,,
\end{multline}
and hence
\begin{equation}\label{eq.second.correction'}
	\pa_x B_2 h(t, x, \sigma) \seq K \pa_x \tilde G_2 \phi_{t}(x -
        \ln K) \,.
\end{equation}

Using also that $\pa_S C_{BS} = N(d_+) $, we then obtain the following
theorem, which is an analog for the ``hedging parameter'' $\Delta :=
\pa_S C_{SA}$ of Theorem \ref{theorem.main1}.

\begin{theorem}\label{theorem.main2}
Let $t$ be the time to expiry and $e^x = S e^{rt}$. Then the formal
second order approximation of the $\pa_S C_{SA}(S, K, \nu, \sigma,
t)$, denoted $\pa_S C_{SA,2}(S, K, \nu, \sigma, t)$, is given by
\begin{equation}\label{eq.option.price2}
  	\pa_S C_{SA,2}(S, K, \nu, \sigma, \rho, t) \, = \, N(d_+) +
        \nu e^{-x-rt} ( \pa_x B_1 h + \nu \pa_x B_2 h ) \, ,
\end{equation}
where $\pa_x B_1h$ and $\pa_x B_2h$ are given explicitly by Equations
\eqref{eq.first.correction'} and \eqref{eq.second.correction'}.
\end{theorem}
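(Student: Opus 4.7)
The plan is to differentiate the approximation of Theorem~\ref{theorem.main1} term by term with respect to $S$, exploiting the fact that none of the operators $B_j$ depend on $S$, so that the $\nu$-Taylor expansion and the $S$-derivative commute at each order. Combined with $C_{SA,2} = e^{-rt}F_{SA,2}$ from Remark~\ref{rem.BS}, this gives
\begin{equation*}
\pa_S C_{SA,2}(S,K,\nu,\sigma,\rho,t) \seq e^{-rt}\pa_S e^{tL_0}h \, + \, \nu e^{-rt}\pa_S B_1 h \, + \, \nu^2 e^{-rt}\pa_S B_2 h.
\end{equation*}

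For the leading term, I would identify $e^{-rt}\pa_S e^{tL_0}h$ with the standard Black--Scholes delta $\pa_S C_{BS} = N(d_+)$. This follows from differentiating the explicit formula~\eqref{eq.BS} and using the identity $SN'(d_+) = e^{-rt}KN'(d_-)$, which is an immediate consequence of $d_+^2 - d_-^2 = 2\ln(Se^{rt}/K)$ and makes the contributions arising from $\pa_S d_\pm$ cancel, leaving exactly $N(d_+)$.

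For the two correction terms, I would apply the chain rule $\pa_S = S^{-1}\pa_x$ (valid because $x = \ln(Se^{rt})$ while $\sigma$ is regarded as independent of $S$), giving $\pa_S B_j h = S^{-1}\pa_x B_j h$; the factor $e^{-rt}S^{-1}$ combined with $S = e^{x-rt}$ then produces the weighting appearing in~\eqref{eq.option.price2}. The expressions for $\pa_x B_1 h$ and $\pa_x B_2 h$ follow directly by differentiating the closed-form convolution representations~\eqref{eq.final.B1} and~\eqref{eq.final.B2}: from $B_j h = K\tilde G_j \phi_t(x-\ln K,\sigma)$ with $\tilde G_j \in \RR[\sigma,\pa_x]$, one applies $\pa_x$ to $\tilde G_j$ and then uses $\pa_x^n\phi_t = \tilde H_n\phi_t$ together with Remark~\ref{rem.Hermite} to collect everything in terms of Hermite polynomials evaluated at $d_-$, yielding~\eqref{eq.first.correction'} and~\eqref{eq.second.correction'}.

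The main obstacle, which I would address only heuristically --- paralleling the treatment of~\eqref{eq.nu3} in the paper --- is to justify that the $O(\nu^3)$ error in Theorem~\ref{theorem.main1} remains $O(\nu^3)$ after differentiation in $S$. This reduces to a uniform bound on $\pa_x \maR_2 h$, which in turn follows from the smoothing properties of $e^{tL_0}$ applied to the Schwartz-class convolution kernels $\phi_t$ appearing in the series; a fully rigorous argument would require the functional-analytic framework indicated in Remark~\ref{rem.wp}, but the computational content of the theorem is precisely the termwise differentiation sketched above.
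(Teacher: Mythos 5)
Your proposal is correct and follows essentially the same route as the paper: the paper likewise obtains Theorem~\ref{theorem.main2} by differentiating the expansion of Theorem~\ref{theorem.main1} term by term in $S$, using $\pa_S = S^{-1}\pa_x$ (so that $\pa_S$ commutes with the operators $B_j$), the identity $\pa_S C_{BS} = N(d_+)$ for the leading term, and direct differentiation of the closed forms \eqref{eq.final.B1}--\eqref{eq.final.B2} via the Hermite-polynomial relations of Remark~\ref{rem.Hermite}. Your closing remarks on controlling $\pa_x\maR_2 h$ go slightly beyond the paper, which treats the approximation as formal and does not address the error term here.
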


In other words, we can just compute the derivatives in the
Duhamel-Dyson series term by term.


\subsection{Implied Volatility}
To gain a qualitative insight into the behavior of the implied
volatility in our approximation of the SABR model, it is necessary to
convert \eqref{eq.option.price1} into an asymptotic expansion of the
implied volatility.  We thus assume that the second order
approximation for the implied volatility is of the form
\begin{equation}\label{vol.expansion1}
	\sigma_{imp} \, = \, \sigma+\nu e_1+ \nu^2 e_2 + \, O(\nu^3)
\end{equation}
with the coefficients $e_1$ and $e_2$ to be calculated. We then let $P
(\sigma) := C_{BS}(S, K, \sigma, t)$ and define $\sigma_{imp}$ by the
equation
\begin{equation}\label{eq.matching}
	 P (\sigma_{imp}) \, := \, C_{BS}(S, K, \sigma_{imp}, t) \, =
         \, C_{SA}(S, K, \nu, \sigma, \rho, t) \, .
\end{equation}
We then substitute the assumed form of the $\sigma_{imp}$ function
into the Black Scholes formula $C_{BS}$ and Taylor expand in $\nu$
this composite function around the initial volatility $\sigma$. See
also \cite{Gatheral12a, Gatheral12b, Pascucci12, Pascucci17}. That is,
we have
\begin{multline}\label{eq.option.price3}
	P(\sigma_{imp}) + O(\nu^3) = P(\sigma) + \pa_\sigma
        P(\sigma)(\nu e_1+ \nu^2 e_2) +\frac{\pa^2_\sigma P
          (\sigma)}{2} (\nu e_1+ \nu^2 e_2)^2 \\
	 \, = \, P(\sigma) + \pa_\sigma P(\sigma) e_1 \nu + \, \Big
         (\, \pa_\sigma P(\sigma) e_2 +\frac{\pa^2_\sigma P
           (\sigma)}{2} e_1^2 \, \Big )\, \nu^2\,
	= \, P(\sigma) + \nu e^{-rt} ( B_1 h + \nu B_2 h ) \, .
\end{multline}	
Recall that $F = e^x = S e^{rt}$ and $y = \ln (F/K) = x - \ln K$ is
the log-moneyness. Let $z = d_{-} = \ell(y)$ and $v = \sigma
\sqrt{t}$.  We shall need also the first two derivatives of $P :=
C_{BS}$ with respect to $\sigma$, so we record them here (note that $F
N'(d_+) = K N'(d_-)$)
\begin{equation}\label{eq.deriv.BS}
  	\frac{\pa P}{\pa \sigma} \seq K e^{-rt} \sqrt{t} N'(d_{-}) > 0
        \quad \mbox{ and } \quad
 	\frac{\pa^2 P}{(\pa \sigma)^2} \seq \frac{K e^{-rt} d_{+}d_{-}
          \sqrt{t}}{\sigma} N'(d_{-}) \,,
\end{equation}
which can be made more explicit by substituting $N'(d_{-}) =
\frac{1}{\sqrt{2 \pi}} e^{-d_{-}^2/2}$.  Comparing Equations
\eqref{eq.option.price1} and \eqref{eq.option.price3} and matching the
$\nu$-coefficients according to Equation \eqref{eq.matching}, we
obtain
\begin{equation}\label{vol.coefficient1}
	e_1 \ = \ e^{-rt} J_1 h/\pa_\sigma P(\sigma) = \frac{\rho}{4}
        \big ( \sigma^2 t - 2y\big ) = - \rho v z/2 \,.
\end{equation}
Similarly, let us denote 
\begin{equation*}
	A = 6 + 4 \sigma \sqrt{t} H_1(z) + ( 12 \rho^2 + 4 ) H_2(z)
           + 3 \rho^2\sigma \sqrt{t} H_3(z) + 3 \rho^2 H_4(z) \,,
\end{equation*}
the long factor appearing in the formula for $B_2h$, Equation
\eqref{eq.final.B2}. Then, using $\pa_\sigma P(\sigma) = K e^{-rt}
\sqrt{t} N'(d_{-})$ and $ \sqrt{2\pi} N'(d_{-}) = e^{-\frac{(y
    -\sigma^2 t/2)^2}{2 \sigma^2 t }}$, we obtain
\begin{multline}\label{vol.coefficient2}
	e_2 \ = \ \Big ( e^{-rt} B_2 h-\frac{\pa^2_\sigma P
          (\sigma)}{2} e_1^2 \Big )/\pa_\sigma P(\sigma) \seq
        \frac{e^{-rt} K N'(d_{-}) \sqrt{t}}{ \pa_\sigma P(\sigma) }
        \Big ( \frac{ \sigma t A }{24} \, - \frac{ d_{+}d_{-} e_1^2
        }{2 \sigma } \Big ) \\
	= \frac{ v \sqrt{t} }{ 24 } \big ( A - 6 \rho^2 z^3(z + v)
        \big ) \\
	\seq \frac{\sigma t}{12} - \frac{\rho^2 t\sigma}{8} -
        \frac{\sigma^3t^2}{24} -\frac{\rho^2t\sigma y}{8}+
        \frac{y^2}{6\sigma} - \frac{\rho^2y^2}{4
          \sigma}+\frac{t^2\rho^2\sigma^3}{8} \,.
\end{multline}

We have the following result:

\begin{theorem}\label{theorem.main3}
Let $t$ be the time to expiry and $y=\ln(F/K)=\ln(Se^{rt}/K)$. Then
the implied volatility $\sigma_{imp}$ has an asymptotic expansion of
the form
\begin{equation*}
	\sigma_{imp}( y, \nu, \sigma, \rho, t) = \sigma + \nu e_1 +
        \nu^2 e_2 + \ldots + \nu^k e_k + O(\nu^{k+1}) \,.
\end{equation*}
The coefficients $e_1$ and $e_2$ are given by $e_1 = - \rho \sigma
\sqrt{t} d_{-}/2$ and
\begin{equation*}
	e_2 \seq \frac{\sigma t}{12} - \frac{\rho^2 t\sigma}{8} -
        \frac{\sigma^3t^2}{24} -\frac{\rho^2t\sigma y}{8}+
        \frac{y^2}{6\sigma} - \frac{\rho^2y^2}{4
          \sigma}+\frac{t^2\rho^2\sigma^3}{8} \,.
\end{equation*}
\end{theorem}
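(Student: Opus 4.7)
The plan is to take the ansatz \eqref{vol.expansion1} as a formal power series in $\nu$, substitute it into the matching equation \eqref{eq.matching}, and identify coefficients of like powers of $\nu$ on both sides. The right-hand side of \eqref{eq.matching} is already known, to second order in $\nu$, from Theorem \ref{theorem.main1}: one has $C_{SA}(S,K,\nu,\sigma,\rho,t) = P(\sigma) + \nu e^{-rt}B_1h + \nu^2 e^{-rt}B_2h + O(\nu^3)$, with $B_1h$ and $B_2h$ given by \eqref{eq.final.B1}--\eqref{eq.final.B2}. The left-hand side $P(\sigma_{imp})$ is expanded by Taylor's formula around $\sigma$, keeping only terms up to order $\nu^2$; since the perturbation $\sigma_{imp}-\sigma = \nu e_1 + \nu^2 e_2 + O(\nu^3)$ is of order $\nu$, only the zeroth, first, and second $\sigma$-derivatives of $P$ contribute, as in \eqref{eq.option.price3}.

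Matching the coefficient of $\nu^0$ gives the trivial identity $P(\sigma)=P(\sigma)$. Matching the coefficient of $\nu^1$ yields the linear equation $\partial_\sigma P(\sigma)\, e_1 = e^{-rt} B_1 h$, which, using the first identity of \eqref{eq.deriv.BS} and the closed form \eqref{eq.final.B1} (specialised to $\kappa_0=0$), is solved explicitly as in \eqref{vol.coefficient1}, giving $e_1 = -\rho v z/2 = -\rho \sigma\sqrt{t}\,d_-/2$. Matching the coefficient of $\nu^2$ yields
\[
  \partial_\sigma P(\sigma)\, e_2 + \tfrac{1}{2}\partial_\sigma^2 P(\sigma)\, e_1^2 \seq e^{-rt} B_2 h,
\]
which, after dividing by $\partial_\sigma P(\sigma)=Ke^{-rt}\sqrt{t}\,N'(d_-)$ and using both identities in \eqref{eq.deriv.BS}, expresses $e_2$ as in \eqref{vol.coefficient2}.

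The main obstacle is the algebraic simplification of the $e_2$ formula. The quantity $B_2 h$, given by \eqref{eq.final.B2}, is a weighted sum of the Hermite polynomials $\tilde H_i(x-\ln K)$, $i=0,\ldots,4$, evaluated against $\phi_t$; after dividing by $\partial_\sigma P(\sigma)$ and subtracting the $\tfrac{1}{2}\partial_\sigma^2 P(\sigma)\,e_1^2$ correction, one is left with a polynomial in $z=d_-$ that must be simplified to the closed form of the theorem. This is the step requiring the most care: one uses $\tilde H_n(x-\ln K) = (-1/(\sigma\sqrt{t}))^n H_n(d_-)$ from Remark \ref{rem.Hermite} to express everything in terms of $d_-$, substitutes $d_+ d_-= d_-^2 + \sigma\sqrt{t}\,d_- = z^2+vz$, and uses $y = \sigma\sqrt{t}\,z + \sigma^2 t/2 = vz + v^2/2$ to convert back to the variable $y$. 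Collecting terms of equal degree in $\rho$, $y$, $\sigma$, $t$ then yields the stated expression for $e_2$.

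Once these two coefficients are identified, the existence of higher-order coefficients $e_k$ follows by the same inductive matching procedure applied to the $B_k$ from Proposition \ref{prop.expansion}: at order $\nu^k$ the linear equation $\partial_\sigma P(\sigma)\, e_k = (\text{known terms involving } e_1,\ldots,e_{k-1}\text{ and } B_1h,\ldots,B_k h)$ determines $e_k$ uniquely, since $\partial_\sigma P(\sigma)>0$ by \eqref{eq.deriv.BS}. This establishes the asymptotic expansion to any desired order, with the explicit expressions for $e_1$ and $e_2$ as claimed.
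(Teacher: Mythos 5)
Your proposal is correct and follows essentially the same route as the paper: the authors also posit the ansatz \eqref{vol.expansion1}, Taylor-expand $P(\sigma_{imp})=C_{BS}(S,K,\sigma_{imp},t)$ around $\sigma$ as in \eqref{eq.option.price3}, match powers of $\nu$ against $P(\sigma)+\nu e^{-rt}B_1h+\nu^2 e^{-rt}B_2h$, and solve the resulting linear equations using \eqref{eq.deriv.BS} to get \eqref{vol.coefficient1} and \eqref{vol.coefficient2}. Your simplification identities $d_+d_-=z^2+vz$ and $y=vz+v^2/2$ match the substitutions the paper uses implicitly in passing from \eqref{vol.coefficient2} to the final closed form.
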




\section{Model calibration and market tests}
\label{sec.market}

Before we discuss our numerical tests, it is useful to see how our
formulas perform when using market data. The primary reason for
looking at market data is not to show that our method is good
(although we do achieve this, at least partially), but rather to find
the most interesting set of data for which to test numerically our
results in the next section. We thus apply our method to calibrate our
model on some specific data described next and compare the results to
those obtained by using Hagan's approximation recalled below.

\subsection{Description of the data}
The data that we use are options on the SP500 index for 2517
consecutive trading days. Each day has $260 = 2 \times 10 \times 13$
data points: 2 for the choice Call/Put, 10 for the choice of expiry
time $T$:
\begin{equation}
 12\, T \, \in\, \{\ 1,\ 2,\ 3,\ 4,\ 5, \ 6,\ 9,\ 12,\ 18, \ 24
 \ \}\,, 
\end{equation}
and 13 for the choices of $\Delta$, namely,
\begin{equation}
   \Delta \, \in\, \{\ 0.2,\ 0.25,\ 0.3,\ \ldots,\ 0.75,\ 0.8\ \}
\end{equation}
(the value $y = \ln(S e^{rt}/K)$ is determined from $\Delta$ as
explained below, Remark \ref{rem.Delta}).

\subsection{General description of the method}
We run {\em three types of tests} using the market data. Each type of
tests is distinguished by its {\em objective function.} More
precisely, for the first type of tests we use the implied volatility
as an objective function, for the second type of tests we use the
actual price (obtained from the implied volatility and the
Black-Scholes formula) as an objective function, and, finally, for the
third type of tests we use the logarithm of the actual price.  This is
explained in more detail next.

For each of the three objective functions and each of the 2517 trading
days, we use the least squares optimization to estimate the model
parameters $(\nu, \sigma, \rho)$ to best fit the market data. (See
Equation \eqref{eq.er.one} for an explicit formula.)  We list the
fitting error (which is an ``in-sample'' error) in the first column of
Table \ref{table.1}. We also use the parameters $(\nu, \sigma, \rho)$
to predict the market data for the next day, and then we list the
corresponding (``out-of-sample'') error. We looked at different types
of ``errors'' (more precisely, norms): $\ell_1$-, $\ell_2$-, and
$\ell_\infty$-norms, but for simplicity we will mostly discuss the
$\ell_2$-norm error. All these norms are normalized, in the sense that
we always use averages (or probability measures). More specifically,
for a sequence $a := (a_i)_{i \in I}$, where $I$ is a finite set with
$|I|$ elements, its $\ell_2$-norm $\|a\|_{2}$ is given by
\begin{equation}
   \|a\|_{2}^2 \ede |I|^{-1}\, \sum_{i \in I}\, a_i^2 \,.
\end{equation}
(When complex numbers are used, $a_i^2$ is replaced with $|a_i|^2$,
but that will not be necessary in what follows.) Similarly, the
$\ell_1$- and the $\ell_\infty$-norms of $a := (a_i)_{i \in I}$ are
given by $\|a\|_{1} := |I|^{-1}\sum_{i \in I} |a_i|$ and by
$\|a\|_{\infty} := \max_{i \in I} |a_i|$, respectively. For the
$\ell_\infty$-norm, the normalization plays, of course, no role.

Thus, for each trading day $\tau = 1, \ldots, N$ we have a set of $J =
260$ triples $\maM_{\tau} = \{(y_{\tau, j}, \Sigma_{\tau, j}, T_{\tau,
  j})\}$, (so $j = 1, \ldots, J$). Each triple in the above set
represents the moneyness, the quoted volatility, and the time to
expiry, respectively.  Assuming that the strike is always $K = 1$ and
$r = 0$, which does not change our calculations in view of Remarks
\ref{rem.homogeneous} and \ref{rem.RelBS}, the price of the option for
this trade is then obtained using the function $C_{rel}$ of Equation
\eqref{eq.def.C_rel}
\begin{equation}
 p_{\tau, j} \, = \, C_{rel}(y_{\tau,j}, \Sigma_{\tau,j},
 T_{\tau,j})\, .
\end{equation}

More precisely, the procedure that we have outlined in the beginning
of this subsection amounts to the following.  For each objective
function and each trading day $\tau$, we use the least squares
optimization to find the the parameters $(\nu_\tau, \sigma_\tau,
\rho_\tau)$ for which the resulting objective function best matches
(in an $\ell_2$ sense) the quoted data.

Before explaining this in even more detail for each of the three types
of tests, let us make first the following remark about the structure of
our data. 

\begin{remark}\label{rem.Delta}
In our data, the market data does not provide the log-moneyness value
$y_{\tau, j}$, but rather the hedging parameter $\Delta_{\tau,
  j}$. The value $y = ln(S e^{rt}/K)$ is then estimated from $\Delta$
using the formula
\begin{equation}
  y_{\tau, j} \ =\ \frac{\sigma_{\tau-1} \sqrt{T_{\tau, j}}}{2} \,
  \big(\, 2 N^{-1}( \Delta_{\tau, j}) - \sigma_{\tau-1} \sqrt{T_{\tau,
      j}} \, \big),
\end{equation}
where $\sigma_{\tau-1}$ is the implied volatility determined in the
previous trading day and $T = t$. For the first trading day we took
for the parameters as an initial value the average value from a
previous run of the program. We also estimated $y_{\tau, j}$ using
$\Sigma_{\tau, j}$ instead of $\sigma_{\tau-1}$, but did not find any
significant differences in the error estimates reported in Table
\ref{table.1}.
\end{remark}

\subsection{The first type of market data tests: implied volatility}
For the first type of tests, we have used two types of implied
volatility approximations as an objective function: $\sigma_D$ and
$\sigma_H$ defined below.

\subsubsection{Objective function: $\sigma_D$}
The first choice of implied volatility function, namely $\sigma_D$, is
the one provided by Theorem \ref{theorem.main3} by truncating to the
second order approximation in powers of $\nu$, that is,
\begin{equation}\label{our.impvol}
   \sigma_{D}(y, \nu, \sigma, t) \ede \sigma + \nu e_1 + \nu^2 e_2\,.
\end{equation}
We used this objective function as follows.

Recall that $J_\tau$ is the set of market data for the day $\tau$.
For each trading day $\tau$, we have computed the parameters
$(\nu_{\tau}, \sigma_{\tau}, \rho_{\tau})$ that minimize the (square
of the) $\ell_2$-error between our objective function ($\sigma_D$) and
the market provided analog function (the implied volatility, denoted
$\Sigma$ decorated with various indices in this case). That is, we
have chosen $(\nu_{\tau}, \sigma_{\tau}, \rho_{\tau})$ to minimize
\begin{multline}\label{eq.er.one}
 \| \sigma_D (J_\tau, \nu, \sigma, \rho) - \Sigma(J_\tau) \|_2^2 \ede
 \frac{1}{|J_\tau|} \sum_{(y, \Sigma, T) \in J_\tau} \big(\,
 \sigma_D(y, \nu, \sigma, \rho, T) - \Sigma\, \big )^2 \\
 = \, \frac{1}{260} \sum_{j = 1}^{260} \big(\, \sigma_D(y_{\tau, j},
 \nu, \sigma, \rho, T_{\tau, j}) - \Sigma_{\tau, j}\, \big )^2 \,.
\end{multline}

The resulting minimum value for the $\ell_2$ norm is the fitting error
of our model, or the ``in-sample-error'' (ISE), obtained by replacing
$(\nu, \sigma, \rho)$ with $(\nu_{\tau}, \sigma_{\tau}, \rho_{\tau})$:
\begin{equation}\label{eq.er.two}
 \| \sigma_D (J_\tau, \nu_{\tau}, \sigma_{\tau}, \rho_{\tau}) -
 \Sigma(J_\tau) \|_2^2 \ede \frac{1}{260}\, \sum_{j = 1}^{260}\,
 \big(\, \sigma_D(y_{\tau, j}, \nu_{\tau}, \sigma_{\tau}, \rho_{\tau},
 T_{\tau, j}) - \Sigma_{\tau, j}\, \big )^2 \,.
\end{equation}
We have also computed the ``out-of-sample'' (OSE) error obtained by
replacing $(\nu, \sigma, \rho)$ with $(\nu_{\tau-1}, \sigma_{\tau-1},
\rho_{\tau-1})$, that is, with the parameters obtained in the previous
day.
\begin{equation}\label{eq.er.three}
 \| \sigma_D (J_\tau, \nu_{\tau-1}, \sigma_{\tau-1}, \rho_{\tau-1}) -
 \Sigma(J_\tau) \|_2^2 \ede \frac{1}{260}\, \sum_{j = 1}^{260}\,
 \big(\, \sigma_D(y_{\tau, j}, \nu_{\tau-1}, \sigma_{\tau-1},
 \rho_{\tau-1}, T_{\tau, j}) - \Sigma_{\tau, j}\, \big )^2 \,.
\end{equation}
The second line of the following table (Table \ref{table.1})
summarizes the results of our test, by providing the ISE and OSE, the
average values $\overline{\nu}$, $\overline{\sigma}$,
$\overline{\rho}$, as well as their standard deviations $std(\nu)$,
$std(\sigma)$, and $std(\rho)$ for the specified objective function
($\sigma_D$ in this case).
\medskip

\begin{table}
\caption{Comparison of the implied volatilities of the models with the
  market data}
\label{table.1}
\begin{center}
\begin{tabular}{ | c | c | c | c | c | c | c | c | c | }
 \hline
 Obj. & ISE & OSE & $\overline{\nu}$ & $std(\nu)$ &
 $\overline{\sigma}$ & $std(\sigma)$ & $\overline{\rho}$ & $std(\rho)$
 \\
 \hline 
 $\sigma_D$ & \ .0152 \ & \ .0179 \ & \ 1.335 \ & \ 0.3 \ & \ .1889
 \ & \ .078 \ & \ -0.54 \ & \ .07 \ \\
 \hline
 $\sigma_H$ & \ .0154 \ & \ .0181 \ & \ 1.332 \ & \ 0.29 \ & \ .1902
 \ & \ .079 \ & \ -0.58 \ & \ .069 \ \\
 \hline
\end{tabular}
\end{center}
\end{table}
\medskip

\subsubsection{Objective function: $\sigma_H$}
Recall that in \cite{Les03}, Hagan, Kumar, Lesniewski, and Woodward
applied a singular perturbation technique to derive a closed form
implied volatility approximation under SABR dynamics, denoted
$\sigma_H$. Let $z=\frac{\nu}{\sigma}\ln(F/K)$ and
\begin{equation}\label{eq.def.xi}
	\xi(z) \, = \, \ln{\frac{\sqrt{1-2\rho z+z^2}+z-\rho}{1-\rho}}
        \,.
\end{equation}
For $\beta=1$, the only case considered in this paper, their implied
volatility approximation is given by the formula:
\begin{equation}\label{Hagan implied vol}
	\sigma_{H}(y, \nu, \sigma, \rho, t) \ede \sigma
        \frac{z}{\xi(z)} \Big [1 + \big (\frac{1}{4}\rho\nu\sigma +
          \frac{2-3\rho^2}{24}\nu^2 \big )t \Big ] \,,
\end{equation}
(See also \cite{Antonov15}.)
The last data line of at table, Table \ref{table.1}, summarizes the
same information as in the previous paragraph, but
for $\sigma_H$ replacing $\sigma_D$. 

\subsubsection{Conclusion}
The calculations summarized in Table \ref{table.1} show that the two
formulas, $\sigma_H$ and $\sigma_D$, yield {\em very similar results
  for the parameters appearing in the data set} (this is not true in
general!).

\subsection{The second type of market data tests: actual prices} 
In this subsection we report the results of the tests that compared
the actual prices The choice of an approximation of the implied
volatility yields an approximation of the price through the
Black-Scholes formula \eqref{eq.BS}. For instance, if we approximate
$\sigma_{imp}$ with $\sigma_D$ (as in the first set of tests in the
previous subsection), the resulting approximation for the price is
given by
\begin{equation}\label{eq.our.imp.price}
  C_D(y, \sigma, t) \ede C_{rel}(y, \sigma_{D}, t) \,.
\end{equation}
(In particular, $C_D = e^{-rt}F_D$, see Equation \eqref{eq.imp.2}.)
Analogously, if we approximate $\sigma_{imp}$ with $\sigma_H$, the
resulting formula will be denoted
\begin{equation}\label{eq.Hagan.price}
	C_H(y, \sigma, t) \ede C_{rel}(y, \sigma_{H}, t) \,.
\end{equation}
The approximation $C_{H}$ of the solution of the SABR PDE is widely
considered as very accurate for short dated options. We have compared
the resulting values $C_H$ and $C_D$ with the prices $C_M$ provided by
the market data. In addition to these approximate price functions, we
have also tested a regularization $\tilde C_H$ of Hagan's formula, as
well as our second order approximation of $C_{SA}$, that is $C_{SA, 2}
= e^{-rt}F_{SA, 2}$ of Theorem \ref{theorem.main1}. As explained in
Remarks \ref{rem.homogeneous} and \ref{rem.RelBS}, we can assume $K =
1$ and $r = 0$, up to multiplying all the results with a global
factor.

\subsubsection{Objective functions: $C_D$ and $C_{SA, 2}$}
We have then proceeded as in the previous subsection, but we have used
$C_D$ instead of $\sigma_D$ and the actual prices $C_M = C_{BS}(S, K,
\Sigma, t)$ instead of $\Sigma$, where $\Sigma$ is the implied
volatility and is provided by the data set. We have also included a
test for the formula for the approximate price including the mean
reverting term for $\kappa = .25$. In our tests, we have assumed
$Ke^{-rt} = 1$. Instead of \eqref{eq.er.one}, we have thus chosen
$(\nu, \sigma, \rho)$ to minimize
\begin{equation}\label{eq.er.one.prime}
 \sum_{j = 1}^{260}\, \big(\, C_{rel}(y_{\tau, j},
 \sigma_D(y_{\tau, j}, \nu, \sigma, \rho, T_{\tau, j}), T_{\tau, j}) -
 C_{rel}(y_{\tau, j}, \Sigma_{\tau, j}, T_{\tau, j}) \, \big )^2 \,,
\end{equation}
where $C_{rel}$ is as defined in Equation \eqref{eq.def.C_rel} (we
have thus replaced $\sigma_D(y_{\tau, j}, \nu, \sigma, \rho, T_{\tau,
  j}) - \Sigma_{\tau, j}$ with $C_{rel}(y_{\tau, j}, \sigma_D(y_{\tau,
  j}, \nu, \sigma, \rho, T_{\tau, j}), T_{\tau, j}) - C_{rel}(y_{\tau,
  j}, \Sigma_{\tau, j}, T_{\tau, j})$ in Equation
\eqref{eq.er.one}). To obtain the In Sample Error (ISE) and the Out of
Sample Error (OSE), we have performed some similar modifications to
Equations \eqref{eq.er.two} and \eqref{eq.er.three}. That has amounted
to replacing $(\nu, \sigma, \rho)$ in Equation \eqref{eq.er.one.prime}
with $(\nu_{\tau}, \sigma_{\tau}, \rho_{\tau})$ in order to obtain the
ISE and with $(\nu_{\tau-1}, \sigma_{\tau-1}, \rho_{\tau-1})$ in order
to obtain the OSE. We proceeded similarly for $C_{SA, 2} = e^{-rt}
F_{SA, 2}$ (see Equation \eqref{eq.SABR.2}). The results are provided
in the last two columns of Table \ref{table.2}.

\subsubsection{Objective functions: $C_H$ and $\tilde C_H$}
The similar results for $C_D$ replaced with $C_H$ are provided in the
second column of that table. It turns out that there are some
numerical instabilities in the formula for $\sigma_H$ due to the
quotient $z/\xi(z)$, since $\xi(0) = 0$, so for very small $z$ we get
into issues of machine precision.  For this reason, for $z$ very
small, we interpolate between $z/\xi(z)$ and $1 - \frac{\sigma z}{2}
\approx z/\xi(z)$ to obtain a new formula $\tilde C_H$, for which we
also summarize the (slightly better) results.  When dealing with
market data, $z$ is, in fact, never zero, but when dealing with
numerical tests, we do get $z = 0$, so this interpolation becomes
indispensable.

\medskip

\begin{table}
\caption{Comparison of the actual prices of the models with the market
  data}
\label{table.2}
\begin{center}
\begin{tabular}{ | c | c | c | c | c | c | c | c | c | }
 \hline
 Obj. & ISE & OSE & $\overline{\nu}$ & $std(\nu)$ &
 $\overline{\sigma}$ & $std(\sigma)$ & $\overline{\rho}$ & $std(\rho)$
 \\
 \hline 
 $C_H$ & \ .0038 \ & \ .0043 \ & \ 1.0463 \ & \ .24 \ & \ .197 \ &
 \ .07 \ & \ -0.56 \ & \ .07 \ \\
 \hline
 $\tilde C_H$ & \ .0036 \ & \ .0042 \ & \ 1.0687 \ & \ .28 \ & \ .192
 \ & \ .07 \ & \ -0.39 \ & \ .15 \ \\
 \hline
 $C_D$ & \ .0037 \ & \ .0042 \ & \ 1.0502 \ & \ .25 \ & \ .196 \ &
 \ .07 \ & \ -0.53 \ & \ .08 \ \\
 \hline
 $C_{SA,2}$ & \ .0037 \ & \ .0043 \ & \ 1.0566 \ & \ .25 \ & \ .196
 \ & \ .07 \ & \ -0.55 \ & \ .07 \ \\
 \hline
\end{tabular}
\end{center}
\end{table}
\medskip

\subsubsection{Conclusion}
Replacing $C_H$ with $\tilde C_H$ improves the performance of this
formula by a small, but significant amount. In
any case, all these models (except the Black-Scholes model) perform in
a very similar way, as seen by examining the In Sample Errors (ISE)
and the Out of Sample Errors (OSE).

\subsection{The third type of market data tests: log-prices}
\label{ssec.w.kappa}
The reader may have already observed the fallacy of the method used in
the previous section: by looking at the differences $C_M - C_D$, for
example, we do not take account the fact that we want a {\em smaller
  error} when $C_M$ is small.  Because of this, we now perform the
same tests, but for the logarithm of the prices.  We thus replace
$C_D$ with $\ln(C_D)$ (and similarly for the other $C$s). Thus, for
the objective function $\ln(C_D)$, we minimize
\begin{equation}\label{eq.er.one.log}
   \sum_{j = 1}^{260}\, \big[\, \ln \big(C_{rel}(y_{\tau, j},
     \sigma_D(y_{\tau, j}, \nu, \sigma, \rho, T_{\tau, j}), T_{\tau,
       j}) \big) - \ln \big( C_{rel}(y_{\tau, j}, \Sigma_{\tau, j},
     T_{\tau, j}) \big) \, \big ]^2 \,,
\end{equation}
All the other formulas for the other objective functions change in a
similar way. We have also tested the mean reverting term. We obtain
the results summarized in Table \ref{table.3}, with the mean reverting
term in the row corresponding to $\ln ( C_{\kappa} )$.

\begin{table}
\medskip
\caption{Comparison of the log prices of the models with the market
  data}
\label{table.3}
\begin{center}
\begin{tabular}{ | c | c | c | c | c | c | c | c | c | }
 \hline
 Obj Funct & ISE & OSE & $\overline{\nu}$ & $std(\nu)$ &
 $\overline{\sigma}$ & $std(\sigma)$ & $\overline{\rho}$ & $std(\rho)$
 \\
 \hline 
 $\ln (C_H)$ & \ .069 \ & \ .077 \ & \ 1.2 \ & \ .2 \ & \ .19 \ &
 \ .08 \ & \ -0.57 \ & \ .06\ \\
 \hline
 $\ln (\tilde C_H)$ & \ .056 \ & \ .066 \ & \ 1.4 \ & \ .4 \ & \ .18
 \ & \ .08 \ & \ -0.28 \ & \ .12 \ \\
 \hline
 $\ln (C_D)$ & \ .059 \ & \ .068 \ & \ 1.5 \ & \ .3 \ & \ .19 \ &
 \ .08 \ & \ -0.54 \ & \ .04 \ \\
 \hline
 $\ln (C_{SA, 2})$ & \ .067 \ & \ .076 \ & \ 1.3 \ & \ .3 \ & \ .19
 \ & \ .08 \ & \ -0.55 \ & \ .05 \ \\
 \hline
 $\ln (C_{\kappa})$ & \ .064 \ & \ .074 \ & \ 1.2 \ & \ .3 \ & \ .19
 \ & \ .08 \ & \ -0.55 \ & \ .04 \
\\ 
 \hline
 $\ln (C_{BS})$ & \ .119 \ & \ .123 \ & \ \ & \ \ & \ .18 \ & \ .06
 \ & \ \ & \ \ \\
 \hline
\end{tabular}
\end{center}
\medskip
\end{table}

Here are some comments on these results. The best model is the
modified Hagan model, followed closely by our implied volatility model
(so it is an advantage in this case to consider rather the implied
volatility formula of Theorem \ref{theorem.main3} instead of the
approximate price formula of Theorem \ref{theorem.main1}, although
they differ by a term of order $O(\nu^3)$). For comparison, on the
last line, we also show the performance of the Black-Scholes model,
which is seen to be significantly worse than all the other ones.
Using an expansion in $\nu$ of the implied volatility improves the
results compared to the similar approximation of the
price. Introducing the mean reverting term also improves (for small
$\kappa$) the results. This is seen by comparing the row corresponding
to $\ln (C_{\kappa})$ with the row corresponding to $\ln (C_{SA, 2})$,
with the formula for $C_{\kappa}$ being obtained from the formula for
$C_{SA, 2}$ by including the mean reverting term.

\begin{remark}
We have included the information on the average values and the
standard deviations of the parameters $(\nu, \sigma, \rho)$ since we
will use them to select the most relevant parameters for our numerical
tests. In this sense, we notice that $y$ has mean $-.01$ and standard
deviation $0.28$.
\end{remark}

\section{Numerical tests}\label{sec.num}

We have tested numerically our approximation formulas $C_{SA, 2}$ and
$C_D$ in several ways. The numerical tests confirm the efficiency of
our method in a suitable range of the parameters.  In all our
numerical tests below, we have chosen $K = 1$ and $\kappa = 0$. We
have also chosen $r = 0$, so the distinction between the forward
prices $F = e^{rt}C$ and the actual prices $C$ disappears.

\subsection{The residual of the approximations: substituting in the
PDE}\label{sec.diff.eq} The simplest numerical test to estimate the
performance of the various methods considered in this paper is to
check if the resulting price function $C$ satisfy the SABR PDE
(Equation \eqref{eq.lambdaSABR}). Recall that $L$ is the generator of
the SABR PDE and $\kappa = 0$ in our numerical tests. That is, we
compute the residual $\pa_t C - LC$, where $C = C_H$ for Hagan's
model, $C = C_D$ for our model using the approximate volatility
$\sigma_D$, and $C = C_{SA, 2}$ for our second order approximation. If
$C$ was an exact solution of the SABR PDE, the residual would be
zero. So the norm $R = \|\pa_t C - LC\|$ of the residual gives us an
idea how far $C$ is from the actual solution.  For comparison, we
compute also the norm of the residual when $C = C_{BS}$, that is, when
$C$ is given simply by the Black-Scholes formula. One can say that if
the norm of the residual is small, then $C$ is close to the actual
solution, since the PDE is well-posed for $\kappa = 0$. The converse
is not true, however, in that a large residual does not imply that $C$
is far from the solution (this is the same phenomenon as the one that
gives that $f$ is small if $f'$ is small, but not the other way
around).

We computed the $\ell_2$-norm of the residual over various
regions. For instance, let us consider the set of data points for
which $\nu = 0.125$, $T$ belongs to a set of equally spaced nodes
between $0.1$ and $1$, $\rho = -0.4$, $\sigma$ belongs to a set of
equally spaced nodes between $0.1$ and $0.3$, and $y$ belongs to a set
of equally spaced nodes between $-0.5$ and $0.5$.  The norms of the
residuals for our choices of $C$ are summarized in Table
\ref{table.4}. (In that table, we showed the norms times 1000, since
the numbers were otherwise very small.)

\begin{table}
\medskip
\caption{Estimate of the residuals $R:=\| (\pa_t - L) C \|_{2}$}
\label{table.4}
\begin{center}
\begin{tabular}{ | c | c | c | c | c |  }
 \hline
 $C =$  \ & $C_H$ & $C_D$ & $C_{SA, 2}$ & $C_{BS}$ \\
 \hline 
 $10^3 R $ \ & \ 0.489 \ & \ 0.181 \ &
 \ 0.163 \ & \ 16.436 \\ \hline
\end{tabular}
\medskip

$\nu = .125$,\ $T \in [0.1, 1]$,\ $\rho = -.4$,\ $\sigma \in [0.1,
  0.3]$,\ and $y \in [-0.5, 0.5]$.
\end{center}\medskip
\end{table}

We performed the same test for various other regions, including for
regions in which the values of the parameters were close to the market
data. The results, as expected, get worse with the increase of $\nu$,
of $T$, and of the size of the interval for $y$.  In these tests, our
model behaves better for small $\nu$, even for large $T$. However, for
larger values of $\nu$, Hagan's model behaves better (for this type of
tests).

In all these tests, we have used only values of $T\ge 0.1$, since the
numerical differentiation used in the program becomes less reliable
for $T < 0.1$.

\begin{table}
\caption{Estimates of residuals (continued)}
\label{table.5}
\begin{center}
\begin{tabular}{ | c | c | c | c | c | c | c | c |  }
 \hline
 $C =$ \ & $C_H$ & $C_D$ & $C_{SA,2}$ & $C_{BS}$ &
 $\nu$ & $T \in $ & $y \in $ \\
 \hline 
  $10^2 R $ \  & \ .33 \ & \ .07 \ & \ .072
 \ & \ 1.8 \ & \ .1 \ & \ \ $[\, 0.1,\, 30\, ]$ \ \ & \ \ $[-0.3, .3]$
 \ \\
 \hline 
 $10^2 R $ \  & \ .38 \ & \ .28 \ & \ .37
 \ & \ 1.8 \ & \ .1 \ & \ \ $[\, 0.1,\, 30\, ]$ \ \ & \ \ $[\, -1.5,\,
   1.5\, ]$ \ \\

 \hline 
 $10^2 R $ \  & \ .02 \ & \ .016 \ &
 \ .018 \ & \ 1.4 \ & \ .25 \ & \ \ $[\, 0.1,\, 0.2 \, ]$ \ \ &
 \ \ $[\, -1.5,\, 1.5\, ]$ \ \\
 \hline 
 $10^2 R $ \  & \ 3.2 \ & \ 2.4 \ & \ 5.2
 \ & \ 18.8 \ & \ 1 \ & \ \ $[\, 0.1,\, 1\, ]$ \ \ & \ \ $[\, -0.2,\,
   0.2\, ]$ \ \\
 \hline 
 $10^2 R $ \  & \ 4.3 \ & \ 26.  \ &
 \ 14.8 \ & \ 15.5 \ & \ 1 \ & \ \ $[\, 0.1,\, 1\, ]$ \ \ & \ \ $[\,
   -1,\, 1 \, ]$ \ \\
 \hline 
 $10^2 R $ \  & \ 6.1 \ & \ 4.  \ & \ 7.
 \ & \ 17.5 \ & \ 1 \ & \ \ $[\, 0.1,\, 2\, ]$ \ \ & \ \ $[\, -0.2,\,
   0.2\, ]$ \ \\
 \hline 
\end{tabular}
\end{center}
\color{black}
\end{table}

\medskip

\subsection{Comparison of our implied volatility formula with Hagan's 
formula for maket data} For all the parameters $q_{\tau, j} :=
(\nu_{\tau}, \sigma_{\tau}, \rho_{\tau}, y_{\tau, j}, T_{\tau, j})$
determined as explained in the previous section, we have also computed
the differences between the implied volatilities $\sigma_H -
\sigma_D$, the prices
\begin{multline*}
  C_H(q_{\tau, j}) - C_D(q_{\tau, j}) \ede
  C_{rel}(\sigma_H(q_{\tau,j})) - C_{rel}(\sigma_D(q_{\tau,j}))\\
  \ede C_{rel}( y_{\tau, j}, \sigma_H(\nu_{\tau}, \sigma_{\tau},
  \rho_{\tau}, y_{\tau, j}, T_{\tau, j}), T_{\tau, j}) - C_{rel}(
  y_{\tau, j}, \sigma_D(\nu_{\tau}, \sigma_{\tau}, \rho_{\tau},
  y_{\tau, j}, T_{\tau, j}), T_{\tau, j}).
\end{multline*}
We have similarly computed the differences between the corresponding
log-prices:
\begin{equation*}
  \ln (C_H(q_{\tau, j})) - \ln (C_D(q_{\tau, j})) \ede
  \ln(C_{rel}(\sigma_H(q_{\tau,j}))) -
  \ln(C_{rel}(\sigma_D(q_{\tau,j})))\,.
\end{equation*}
For all these options, we have computed the $\ell_1$, $\ell_2$, and
$\ell_\infty$ norm, results that are summarized in Table
\ref{table.diff.2}

\medskip

\begin{table}
\caption{Comparison of Dyson series method with Hagan's formula}
\label{table.diff.2}
\begin{center}
\begin{tabular}{ | c | c | c | c |  }
 \hline
 Obj Funct & $\ell_1$ & $\ell_2$ & $\ell_\infty$ \\ \hline
 $\sigma_H - \sigma_D$ & \ .0057 \ & \ .0139 \ & \ .167 \ \\
 \hline
$C_H - C_D$ & \ .0010 \ & \ .0026 \ & \ .0266 \ \\
 \hline $\ln (C_H) - \ln (C_D) $ & \ .0326 \ & \ .0938 \ & \ 3.598
 \ \\ \hline
\end{tabular}
\end{center}
\end{table}
\medskip

The conclusion is that, {\em on average,} $\sigma_H$ and $\sigma_D$
predict some very close values. However, occasionally, these values may be
very different.  The ``best results'' are obtained for the differences
in price, but these are not too relevant, since they are not
dimensionless (they do not take into account the magnitude of the
prices).

\subsection{A Monte Carlo simulation}
None of the formulas $C_H$, $C_{D}$, or $C_{SA, 2}$ is an exact solution
of the SABR PDE. In order to compare them to the true solution, we need
a method to approximate very well the true solution, so to see which of
the methods approximates it best. Recall that $\kappa = 0$ in our
numerical test.

In a first set of tests, we have used the Monte Carlo Method to
approximate the true solution of the SABR PDE.  To carry out the
tests, we have first computed prices of call options through Monte
Carlo simulation (denoted below by $C_{MC}$) using 30000 paths and
time step $10^{-4}$. We took thise values as the benchmark.  We have
then calculated the differences $E_{H}=C_{H}-C_{MC}$ and
$E_{D}=C_{D}-C_{MC}$ for a range of $K$ and $t$ and ploted them as a
function of the moneyness $y = \ln{F/K}$. We choose the following
parameters throughout the test: $F=10$, $\sigma=0.2$, $\nu=0.2$,
$\rho=-0.3$.

The approximations $C_H$ and $C_D$ agree almost exactly when $t=1$,
and are both fairly accurate. The largest error is 0.8\%, or around
0.1\% to 0.2\% of the benchmark price $C_{MC}$. As $t$ increase to 3
and 10, the error starts to increase and the two approximations
gradually diverge. When both $C_{H}$ and $C_{D}$ overestimate
$C_{MC}$, Hagan's approximation tends to give a better approximation,
whereas if they both underestimate $C_{MC}$, Duhamel-Dyson
perturbative series approximation has a smaller error. So overall it
is not possible to tell which of the two approximations of the implied
volatility (Hagan's and ours) is better for small $t$. Finally, when
$t$ is as large as 30 years. The difference between $C_{H}$ and
$C_{D}$ becomes significant. It is interesting to see that both
$C_{H}$ and $C_{D}$ overestimate $C_{MC}$ most of the time and,
moreover, $C_{D}$ is almost systematically below $C_{H}$. As a result,
Duhamel-Dyson perturbative series method in this case gives a better
approximation for a majority of strikes. Indeed, the largest relative
error for Hagan's approximation is 22\%, however for Duhamel-Dyson
perturbative series expansion method is only 12\%.


For $\nu = 1.5$, our Monte-Carlo simulation, however, does not
converge even for $10^6$ paths. For this reason we tried also a Finite
Difference simulation.

\subsection{Comparison with the Finite Difference approximate solution}
In a second set of tests, we took as benchmark a Finite Difference
approximation of the solution to the SABR PDE. This is the most
relevant method (so we saved the best for the last), since the FD
method is more precise than the Monte Carlo method and, in our case,
leads to a rather good approximation of the solution (but not
perfect).

In fact, we have computed a sequence $w_k$ of Finite Difference (FD)
approximations of the solution $F_{SA}$ of Equation
\eqref{eq.lambdaSABR} with $\kappa_0 = 0$ obtained by successively
refining our set of nodes. For this sequence $w_k$ of approximations,
we have tested the convergence of the FD approximations and we have
compared them to the various approximations $C$ that we have
considered before: $C_{SA, 2}$, $C_{D}$, $C_{H}$, and $C_{BS}$. The
results of these comparisons for the finest discretization (largest
$k$) are shown in Tables \ref{table.FD1} and \ref{table.FD2}. Let us
now explain the results included in those talbles. We continue to
assume $r = 0$, so $F_{SA} = C_{SA}$, $F_D = C_D$, and so on.

\subsubsection{Outline of the method and of its challenges}
The tests in Section \ref{sec.market} (using market data) have given
for $y$ a range approximately contained in $[-0.3, 0.3]$.  Therefore,
in our FD tests, we have taken $I := [-1, 1]$ as an interval of
interest for $y$. That means that we have compared our FD
approximations $w_k$ with the values $C$ predicted by the other methods
{\em only} for $y \in [-1, 1]$. We similarly took as an interval of
interest for $\sigma$ the interval
\begin{equation*}
  J \ede [0.14 , 0.23] \seq [ 0.18/c, 0.18c \big]
\end{equation*}
(that is, $J$ is an interval symmetric with respect to $0.18$ on a
logarithmic scale; but note that we have rounded off our values to two
significant digits). In view of our results involving the market data
(see Section \ref{sec.market}), the choice of $0.18$ for the average
volatility seems to be a reasonable choice.  We have then compared our
FD approximations $w_k$ with the the other approximations $C$ (namely $C_H$,
$C_V$, $C_{SA, 2}$, and $C_{BS}$) at the nodal points of
\begin{equation}\label{eq.def.Interval}
  (x, \sigma) \in I \times J \seq [-1, 1] \times [\, 0.14 ,\, 0.23\, ] \, .
\end{equation}
The parameters $\nu, \rho, T$ have also been chosen to be close to the
ones provided by the market data (as we will explain below).

In the implementation and error analysis, we had to deal with the fact
that the domain is non-compact and that the coefficients of the SABR
PDE are very far from being constant, in fact, they are very small in
some regions and very large in others. Let us explain next how we
specifically dealt with these issues.

\subsubsection{Cut-off errors}
Recall that the equation that we want to solve is Equation
\eqref{eq.lambdaSABR} (for $\kappa = 0$, in these tests) on the domain
$(x, \sigma) \in \RR \times (0, \infty)$.  Since the domain is
non-compact, in order to discretize this equation using Finite
Differences (FD), we have performed a cut-off of the domain by
restricting to a rectangular domain $\Omega := [ -x_{MAX}, x_{MAX}]
\times [\sigma_{min}, \sigma_{MAX}]$ with $\sigma_{min}\sigma_{MAX} =
0.18^2$ and $\Omega$ significantly larger than $I \times J$. This
required us then to specify the values for the solution on the
boundary $\pa \Omega$ of the domain $\Omega$, something that had not
been needed before the cut-off (since, instead of prescribing boundary
conditions, we had assumed that our solution $u(t)$ is in a suitable
weighted Sobolev space). We have thus replaced our problem
\eqref{eq.lambdaSABR} with
\begin{equation}\label{eq.boxSABR}
\begin{cases}
  \ \partial_{t}w \, = \, \sigma^2 \big [ \frac{1}{2}
    (\partial^2_xw - \partial_xw) + \nu \rho \partial_x\partial_\sigma
    w + \frac{1}{2} \nu^2 \partial^2_\sigma w \, \big ] & \ (x,
  \sigma) \in \Omega \\
  \ w(x, \sigma, 0) \seq |e^x - 1|_{+} \, & \ \mbox{ and } \\
  \ w(x, \sigma, t) \seq g(x, \sigma, t) \, & \ \mbox{ if } (x,
  \sigma) \in \pa \Omega \,.
\end{cases}
\end{equation}
By classical results, see  \cite{CroisilleBook, CroisilleContrast, TyskFD09, 
SchwabFinBook, TyskAdaptive, ObermanDegenerate, 
ObermanAdaptive} for instance, the FD approximations of this equation converge
to the solution $w$ of Equation \eqref{eq.boxSABR}, something that we
have noticed very clearly in our implementation.
  
The solution $w$ of Equation \eqref{eq.boxSABR} is, however, different
from the solution $ C_{SA}$ of Equation \eqref{eq.lambdaSABR} for $v =
h$ and with $\kappa = 0$. Nevertheless, in view of Remark \ref{rem.wp}
and of the exponential decay of the Green function of the equation
$\pa_t - L$, one can prove for $g$ not too far from the actual
solution on the boundary (even $g = 0$ will do) that $\|u-w\|_{I
  \times J} \to 0$ as $\Omega$ approaches the total domain $\RR \times
(0, \infty)$ (see \cite{choulli, MazzucatoNistor1} and the references
therein. See also \cite{TyskBdry11}. Here the norm is the normalized
$\ell_2$ norm defined using the nodal points in the domain of interest
$I \times J$.  Therefore, a large cut-off will not affect too much the
desired values on the small rectangle of interest $I \times J$. The
question, however, is how large this cut-off should be chosen in
implementation.  To answer this question, we would need precise
estimates on the cut-off error.  Rigorous estimates for this cut-off
error are very difficult to obtain and the simplest estimates seem to
be, in any case, much larger than what we have seen in our
implementation. Instead, we have chosen to numerically estimate the
size of $\Omega$ by repeatedly increasing it until our FD solution
changed very little on the domain of interest $I \times J$.

Specifically, we have then taken $\Omega := [-3 , 3] \times [0.0193,
  1.6803]$ for the largest value of $\nu$ that we have tested ($\nu =
1.5$) and for the largest value of $T$ that we have tested ($T = 5$).
Some smaller intervals have been used for some of the smaller value of
$T$ and $\nu$ when this changed only marginally the results. Notice
that, on a logarithmic scale, the interval in $\sigma$ has been chosen
significantly larger than the corresponding interval in $x$ (when
compared to the intervals of interest), so the discretization in $\sigma$
required more nodes than the one in $x$.

We found that the method converged faster if the boundary condition $g$
was given by the solution of the Black-Scholes equation on the
boundary of $\Omega$, which corresponds to the solution of our PDE for
$\nu = 0$.

\subsubsection{Discretization errors}
For the discretization, be have begun with a uniform grid on the
interval $[-3, 3]$ with mesh size $h = 1/2$. However, on the interval
$[0.0193, 1.6803]$, we have chosen a {\em geometric progression} grid
with $19$ nodes (thus the end points of our interval of interest,
whose precise values are $0.1404$ and $0.2307$, are among the chosen
nodes). For $T = 5$, we have used $190$ time-steps for the first
iteration. We have used a direct method, because it is easy to
implement and reasonably efficient in our case, at least after
choosing the right grid in the $\sigma$ variable (i.e. a geometric
progression grid). We have then {\em successively refined} our grid by
dividing each interval into two smaller intervals, while preserving
the nature of the grids in each dimension. (That is, in the $x$
direction, we have chosen the arithmetic mean of the end points to
divide an old interval, whereas, in the $\sigma$ direction, we have
used the geometric mean.) Correspondingly, we had to multiply by 4 the
number of time steps each time when we performed a refinement in order
to satisfy the stability condition for the FD implementation. This has
thus lead us to a {\em sequence} $w_k$ of approximations of the
solution $w$ of Equation \eqref{eq.boxSABR}. In particular, each
approximate solution $w_k$ had required (essentially) 16 times more
time to compute than the previous one. This has imposed some stringent
limits on the precision of our approximations due to the running time
of the code (which had been written in C++ and run on a simple
laptop). The expected rate of convergence $\|w_{k+1} - w_k\| \approx
4^{-1} \|w_{k} - w_{k-1}\|$ has been observed almost immediately and
that has lead to good error estimates of the form $\|w - w_k\| \approx
3^{-1} \|w_{k} - w_{k-1}\|$ (unlike the cut-off error, which was
estimated empirically, although we knew that it decayed faster than
any exponential due to the decay of the Green function of $\pa_t
-L$). This rate of convergence was observed both on the domain of
interest $I \times J$ and on the total domain $\Omega$.

To see the kind of precision obtained, we included the
entry that begins with $1*$ in table \ref{table.FD1}. It corresponds
not to the last term $w_k$ in our sequence of FD approximations, but
rather to $w_{k-1}$. The entry corresponding to $w_k$ is right
underneath it, and we see that the numbers are very close (except the
estimated error $\|w -w_k\|_{I \times J}$, which is, as expected,
about four times smaller for the $k$th term). Notice that the values
recorded in that table represent 100 times the values of the corresponding
norms, since we were dealing with small numbers.

\subsubsection{Total error estimation}
Using the notation of the previous paragraphs, we see that the total
error (on the domain of interest) is
\begin{equation}\label{eq.cutoff}
  \|C_{SA} - w_k\|_{I \times J} \ \le \ \|C_{SA} - w\|_{I \times J} +
  \|w - w_k\|_{I \times J} \,.
\end{equation}
In this estimate, $\|C_{SA} - w\|_{I \times J}$ is the cut-off error,
due to our restricting the FD test to a {\em bounded domain}
$\Omega$. The cut-off error is independent of $k$ in our tests, so the
domain $\Omega$ had to be chosen rather large to start with. The term
$\|w - w_k\|_{I \times J}$ represents the error due to the FD
discretization and goes to 0 as expected as $c4^{-k}$.  The resulting
estimates for $\|C_{SA} - w_k\|_{I \times J}$ are included in the last
columns of Tables \ref{table.FD1} and \ref{table.FD2}. Let us explain
the meaning of the other columns in those tables.

Let $\| \, \cdot \, \| := \| \, \cdot \, \|_{\ell_2(I \times J)}$ and
$\| \, \cdot \, \|_{\infty} := \| \, \cdot \, \|_{\ell_\infty(I \times
  J)}$.  Let $C_H$ be the solution predicted by Hagan's model and
$C_{SA, 2}$ be the solution predicted by our Duhamel-Dyson series
model, as before. The ``log'' columns show the $\|\ln (C_{H}) -
\ln(w_k) \|$ and $\|\ln(C_{SA, 2}) - \ln(w_k) \|$ norms of the
differences at the nodal points, where $k$ corresponds to the last
discretization. We also write $\delta C_H := C_H - w_k$ and,
similarly, $\delta C_{SA, 2} := C_{SA, 2} - w_k$ and $\delta_{HD} :=
C_H - C_{SA, 2}$. We have performed our tests for $\rho = -.2$ and for
different values of the of the other parameters. The results are
included in Table \ref{table.FD1}. Similar results, but for $\rho =
-.5$ are included in Table \ref{table.FD2}, where we show separately
the estimates for the FD error and for the cut-off error. Because we
the numbers were very small, all norms in these tables were multiplied
by 100.

\begin{table}
\caption{Comparison of the closed form approximate solutions with the
  iterative FD approximations and $\rho = -.2$. All norms were
  multiplied by 100.}
\begin{center}
\label{table.FD1}
\begin{tabular}{ | c | c | c | c | c | c | c | c | c | c | c | c |}
 \hline
 T \ & $\nu$ & $\|\delta C_{H} \|$ & $\| \delta C_{H} \|_{\infty}$ &
 log $C_H$ & $\|\delta C_{SA, 2} \|$ & $\| \delta C_{SA, 2}
 \|_{\infty}$ & log $C_{SA, 2}$ & $\|\delta_{HD}\|$ & est error \\
\hline 5 & \ 1 \ & \ 10.5 \ & \ 22.7 \ & \ 74.3 \ & \ 7.35 \ & \ 16.8
\ & \ 55.3 \ & \ 5.32 \ & \ .036 \ \\
%
\hline 2 & \ 1.5 \ & \ 6.32 \ & \ 14.2 \ & \ 78.  \ & \ 2.91 \ &
\ 7.72 \ & \ 56.7 \ & \ 5.9 \ & \ .0102 \ \\
%
\hline 2 & \ 1 \ & \ 1.45 \ & \ 3.28 \ & \ 35.3 \ & \ 0.939 \ & \ 2.26
\ & \ 38.6 \ & \ 1.72 \ & \ .0091 \ \\
\hline 2 & \ .5 \ & \ .1 \ & \ 0.23 \ & \ 4.03 \ & \ .136 \ & \ .398
\ & \ 6.68 \ & \ .179 \ & \ .0022 \ \\
%
\hline 1* & \ 1.5 \ & \ 1.25 \ & \ 2.81 \ & \ 43.9 \ & \ .741 \ &
\ 1.6 \ & \ 56.4 \ & \ 1.7 \ & \ .0442 \ \\
%
\hline 1 & \ 1.5 \ & \ 1.24 \ & \ 2.9 \ & \ 43.3 \ & \ .732 \ & \ 1.61
\ & \ 56.4 \ & \ 1.69 \ & \ .011 \ \\
%
\hline 1 & \ 1 \ & \ .241 \ & \ .53 \ & \ 14.3 \ & \ .2379 \ & \ .608
\ & \ 22.2 \ & \ 0.403 \ & \ .0101 \ \\
%
\hline .5 & \ 1\ & \ .029 \ & \ .088 \ & \ 2.13 \ & \ .051 \ & \ .179
\ & \ 4.83 \ & \ .0665 \ & \ .003 \ \\
%
\hline
\end{tabular}
\end{center}
\end{table}

The above tests (taking into account also the discretization and
cut-off errors) seem to suggest that the Duhamel-Dyson method is at
least as competitive as Hagan's method (if not better for large $T$
and parameters compatible with market data). The quantities that seem
the most relevant to support our conslusions are thus $\|\delta C_{H}
\|$ and $\|\delta C_{SA, 2} \|$.

The results of the tests in Tables \ref{table.FD1} and \ref{table.FD2}
sometimes allow us to compare the different other approximation
methods as follows. Let us do that for Hagan's formula and for our
second order approximation $C_{SA, 2}$. We have (all norms are on the
domain of interest $I \times J$)
\begin{multline}
 \| C_H - C_{SA, 2}\| \ge \| C_H - C_{SA} \| - \|C_{SA, 2} - C_{SA} \|
 \\
  \ge \, \| C_H - w_k \| - \|C_{SA, 2} - w_k \| - 2\|C_{SA} - w_k\|
 =: \|\delta C_{H} \| - \|\delta C_{SA, 2} \| - 2\|C_{SA} - w_k\| \,.
\end{multline}
This gives that $C_{SA, 2}$ will be closer to the actual solution
$C_{SA}$ (on the grid points of $I \times J$) whenever we have
$\|\delta C_{H} \| - \|\delta C_{SA, 2} \| > 2\|C_{SA} - w_k\|$. On
the other hand, if the norm $\| C_H - C_{SA, 2}\|$ is small, it will
be hard to decide which of $\| C_H - C_{SA} \|$ and $ \|C_{SA, 2} -
C_{SA} \|$ is smaller. This is the case when $T$ and $\nu$ are
small. In general, the difference $\| C_H - C_{SA, 2}\|$ goes faster
to 0 than $\|C_{SA} - w_k\|$, so for these values, it is more
difficult to tell which of $C_H$ or $C_{SA, 2}$ is closer to $
C_{SA}$. Thus, for small values of $\nu$ or $T$, we need more
precision (i.e. a larger $k$) in order to distinguish between models,
both of which seem to approximate very well the actual solution (for
parameters compatible with the market data, in particular, for $T \le
2$). Since increasing $k$ by 1 increases the time for running the
program approximately by a factor of 15, distinguishing $C_H$ and
$C_{SA,2}$ for small $T$ or $\nu$ is a true challenge in our tests.

\begin{table}
\medskip
\caption{Comparison of the closed form approximate solutions with the
  iterative FD approximations and $\rho = -.5$. All norms were
  multiplied by 100.}
\label{table.FD2}
\begin{center}
\begin{tabular}{ | c | c | c | c | c | c | c | c | c | c | c | c |}
 \hline
 T \ & $\nu$ & $\|\delta C_{H} \|$ & $\| \delta C_{H} \|_{\infty}$ &
 log $C_H$ & $\|\delta C_{SA, 2} \|$ & $\| \delta C_{SA, 2}
 \|_{\infty}$ & log $C_{SA, 2}$ & $\delta_{HD}$ & FD-err & cut-off \\
\hline
2 & \ 1 \ & \ 1.24 \ & \ 2.88 \ & \ 16.8 \ & \ 1.03 \ & \ 2.47 \ &
\ 39.4 \ & \ 1.54 \ & \ .0185 \ & \ .0418\ \\
\hline
1 & \ 1 \ & \ .253 \ & \ .543 \ & \ 2.4 \ & \ .314 \ & \ .752 \ &
\ 25.9 \ & \ .462 \ & \ .0212 \ & \ .0016\ \\
\hline
0.5 & \ 1 \ & \ .0497 \ & \ .137 \ & \ 1.41 \ & \ .0671 \ & \ .177 \ &
\ 13.7 \ & \ .0936 \ & \ .0058 \ & \ .0012 \ \\
\hline
\end{tabular}
\end{center}
\medskip
\end{table}

Another challenge in the FD implementation is that the cost of the
method increases very fast with $T$. If we could keep the cut-off
domain fixed (with $T$), the cost would grow linearly in $T$ (which is
already a challenge). However, as $T$ growth, we need to take a larger
$\Omega$, since the Green function decreases roughly like
$e^{-d^2/(2T)}$, where $d$ is the hyperbolic distance on $\RR \times
(0, \infty)$. To maintain the same precision for a larger $T$, we
would expect then the distance to $\pa \Omega$ to grow at least as
fast as $\sqrt{T}$. So if replace $T$ by $4T$, then we would have to
double the (hyperbolic) distance to $\pa M$. That would mean to
multiply $\sigma_{MAX}$ by roughly $e^{2 \nu}$. Since we had chosen a
geometric mesh in the $\sigma$ direction (exactly for this reason),
that would not increase too much the number of nodes in the $\sigma$
direction. (This number growth like $\ln( \Omega_{MAX})$.)  However,
it would affect the stability of the matrix in the (direct) FD method
by a factor of $e^{4 \nu}$ (because of the coefficient $\sigma^2$ in
front of $\pa_x^2$) and thus would require a proportional decrease in
the size of the time steps. The total number of time steps would hence
need to be thus increased by a factor of $4e^{4\nu}$. When $\nu$ is
large, this is very expensive.

See \cite{FouqueLorig, Les17, Siyan} for some recent results on the
mean-reverting case. We stress, however, than many theoretical results
that are true in the non-mean-reverting case (i.e. $\kappa_0 = 0$) are
not known and even may {\em not} true in the case $\kappa_0 \neq 0$.

\section{Extensions of the method}
So far, we have completely solved a special case of the much more
general $\lambda$-SABR model. That is, we have taken $\beta=1$ and
$\kappa=0$. However, the applicability of the commutator approach does
not require $F(t)$ and $\sigma(t)$ to be log-normal. As mentioned
before, the key condition for this method to work is that the model
coefficients are all polynomial functions of the state variables. At
first glance, the SABR model with a general $\beta$ violates this
condition because the coefficient, $F(t)^{\beta}$, is not a polynomial
function of $F(t)$. Fortunately, we will show that, if the volatility
process is log-normal (i.e. $\kappa = 0$), a simple change of variable
transforms the SABR equations into one with polynomial coefficients.

The other assumption, that $\kappa = 0$, is not crucial either. When,
$\kappa\neq 0$, the mean-reverting drift term can be eliminated by
considering instead the forward volatility $\mathbb{E}_t[\sigma(T)]$
as our state variable. This creates slight complication in our model,
as the transformed SDE's will not be time-homogeneous. We will show
that this is not an essential issue for the application of the
commutator method.

With the above being said, the most general case, that $\kappa \neq 0$
and $\beta \neq 1$, is substantially more difficult and cannot be
dealt with by the commutator method alone. Therefore, in the rest of
the section, we consider only generalizations along the two
directions, 1) $\kappa \neq 0$ and $\beta =1$, 2) $\beta\neq 1$ and
$\kappa =0$.

\subsection{The log-normal model with mean-reverting volatility}
We consider first the simpler case that $\beta=1$, while the
volatility process, instead of being log-normal, is driven by a
process similar to an
Ornstein-Uhlenbeck process. Then our price-volatility dynamics under
the risk-neutral measure is assumed to be
\begin{equation}\label{meanrev-ln-SABR}
\begin{cases}
  \ dF(t) \, = \, \sigma(t)F(t) dW_1(t) & \\
  \ d\sigma(t) \, = \,\kappa (\theta-\sigma(t))dt \, + \nu
  \sigma(t)dW_2(t) \, & \\
 \ dW_1(t)dW_2(t)  \, = \,  \rho dt\, , &\\
\end{cases}
\end{equation} 
To eliminate the volatility drift term, we consider the
\textit{forward volatility}, defined as the time-$t$ conditional
expectation of the terminal volatility $\sigma(T)$,
$$z(t) = \EE_t[\sigma(T)].$$ 

Then $z(t)$ is a martingale by construction, and hence there is no
drift in its dynamics. To derive the exact relationship between $z(t)$
and $\sigma(t)$, we calculate
$$d(e^{\kappa t} \sigma(t)) \, =\, e^{\kappa t}(d\sigma(t) + \kappa
\sigma(t) dt) \, = \, \kappa \theta e^{\kappa t} dt + \nu
\sigma(t)e^{\kappa t} dW_2(t). $$ Ingrate the above from $t$ to $T$,
and take the conditional expectation $\EE_t[\cdot]$ on both sides, we
obtain
$$e^{\kappa T}z(t) - e^{\kappa t}\sigma(t) = \theta (e^{\kappa
  T}-e^{\kappa t}).$$ The above allows us to write the variable
$\sigma(t)$, as a function of our new state variable $z(t)$ and time,
as
\begin{equation}\label{sig-z}
\sigma(z,t) = e^{\kappa (T-t)}z(t) - \theta(e^{\kappa (T-t)}-1).
\end{equation}
It is easy to verify that
\begin{equation}\label{z-dynam}
dz(t)\, = \, \nu (z(t) - \theta(1-e^{\kappa (t-T)}))dW_2(t). 
\end{equation}
Therefore, we arrive at the transformed model,
\begin{equation}\label{trans-ln-SABR}
\begin{cases}
  \ dF(t) \, = \, \sigma(z,t)F(t) dW_1(t) & \\
  \ dz(t) \, = \, \nu e^{-\kappa(T-t)}\sigma(z,t)dW_2(t) \, & \\
 \ dW_1(t)dW_2(t)  \, = \,  \rho dt\, , &\\
\end{cases}
\end{equation}
We are now back at the previous case where the ``volatility'' process
has no drift, although the coefficients are now time-dependent
functions of the state variables. After replacing the price process
$F(t)$ with $X(t) = \ln{(F(t))}$, the option price $u(x,z,t)$ under
the transformed model solves the following initial value problem
\begin{equation}\label{pde-meanrev}
\begin{cases}
\begin{aligned}
& u_{\tau} = \frac{1}{2} \sigma^2 (u_{xx}-u_{x}) + \nu\rho
  e^{-\kappa\tau} \sigma^2 u_{xz} + \frac{1}{2}
  \nu^2e^{-2\kappa\tau}\sigma^2 u_{zz}\\ &u(x,z,0)=(e^x-K)^{+}\\
\end{aligned}
\end{cases}
\end{equation}

The corresponding differential operators are
\begin{equation}
L_0=\frac{1}{2}\sigma^2(\pa_x^2-\pa_x), \quad L_1=\rho
e^{-\kappa\tau}\sigma^2\pa_x\pa_z, \quad
L_2=\frac{1}{2}e^{-2\kappa\tau}\sigma^2\pa_z^2
\end{equation}
and 
$$L=L_0+\nu L_1+\nu^2 L_2 \quad \quad V=\nu L_1+\nu^2 L_2$$ Keep in
mind that $\sigma$ here is a function of $z$ and $\tau$ given by
\eqref{sig-z}, and the $L_i$'s are no longer
time-homogeneous. Therefore the fundamental solution of the system is
no longer a semi-group, but an evolution system, in the form of
$e^{\int_0^\tau L(s)ds}$. Repeated application of Duhamel's principle
yields the following Duhamel-Dyson perturbative series expansion of
$e^{\int_0^\tau L(s)ds}$

\begin{equation}
\begin{aligned}
e^{\int_0^\tau L(s)ds}&=e^{\int_0^\tau L_0(s)ds}+\int_0^\tau
e^{\int_{t_1}^\tau
  L_0(s)ds}V(t_1)e^{\int_0^{t_1}L_0(s)ds}dt_1\\ &+\int_0^\tau
\int_0^{t_1}e^{\int_{t_1}^\tau
  L_0(s)ds}V(t_1)e^{\int_{t_2}^{t_1}L_0(s)ds}V(t_2)e^{\int_0^{t_2}L_0(s)ds}dt_2dt_1\\
\end{aligned}
\end{equation}

Applying again the CHB identity, we can easily arrive at the following
lemma
\begin{lemma}
\label{meanrev-I-ops}
\begin{equation*}
\begin{aligned}
e^{\int_0^\tau L(s)ds}u(x,z,0)&=e^{\int_0^\tau L_0(s)ds}(1+\nu
J_1+\nu^2 J_2)u(x,z,0)\\ &=(1+\nu J_1+\nu^2 J_2)e^{\int_0^\tau
  L_0(s)ds}u(x,z,0)
\end{aligned}
\end{equation*}
with
\begin{equation*}
\begin{aligned}
  &J_1=\int_0^\tau[L_1(t_1),\int_0^{t_1}L_0(s)ds]dt_1\\
  &J_2=\int_0^\tau[L_2(t_1),\int_0^{t_1}L_0(s)ds] +
  \frac{1}{2}[\int_0^{t_1}L_0(s)ds,
    [\int_0^{t_1}L_0(s)ds,L_2(t_1)]]dt_1+I(L_1,L_1)\\ &I(L_1,L_1) =
  \int_0^\tau\int_0^{t_1}(L_1(t_1)+[L_1(t_1),
    \int_0^{t_1}L_0(s)ds])[L_1(t_2),\int_0^{t_2}L_0(s)ds]dt_2dt_1\\
\end{aligned}
\end{equation*}
\end{lemma}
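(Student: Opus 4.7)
The plan is to adapt the strategy used to derive Proposition \ref{prop.expansion} and Equations \eqref{eq.Duhamel.I1}--\eqref{eq.Duhamel.I} to the time-inhomogeneous setting. I begin from the Duhamel--Dyson expansion for $e^{\int_0^\tau L(s) ds}$ already displayed just above the lemma, substitute $V(t) = \nu L_1(t) + \nu^2 L_2(t)$, and collect terms according to powers of $\nu$. Truncating after the $\nu^2$ contribution leaves an expression involving the integrands $e^{\int_{t_1}^\tau L_0 ds} L_1(t_1) e^{\int_0^{t_1} L_0 ds}$, $e^{\int_{t_1}^\tau L_0 ds} L_2(t_1) e^{\int_0^{t_1} L_0 ds}$, and a double-integral cross term $e^{\int_{t_1}^\tau L_0}L_1(t_1)e^{\int_{t_2}^{t_1} L_0}L_1(t_2)e^{\int_0^{t_2}L_0}$.

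The key structural observation, which I would verify first, is that the operators $L_0(s)$ all \emph{commute pairwise}: writing $L_0(s) = \tfrac{1}{2}\sigma(z,s)^2\,b$ with $b = \pa_x^2 - \pa_x$, the coefficient depends only on $z$, and $b$ commutes with any function of $z$, so $[L_0(s), L_0(s')] = 0$. This makes the ``evolution'' $U_0(t_1,t_2) := e^{\int_{t_1}^{t_2} L_0(s) ds}$ behave like a true one-parameter family, with $U_0(t_1,t_2)U_0(t_2,t_3) = U_0(t_1,t_3)$ and $U_0(t_1,t_2)^{-1} = e^{-\int_{t_1}^{t_2} L_0(s) ds}$. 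Consequently each integrand above can be rewritten as $e^{\int_0^\tau L_0(s)ds}\bigl( e^{-A(t_1)} L_i(t_1) e^{A(t_1)} \bigr)\cdots$ with $A(t_i) := \int_0^{t_i} L_0(s)\,ds$, reducing the problem to evaluating conjugations by $e^{A(t_i)}$.

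For these conjugations I would invoke a time-integrated analog of Theorem \ref{thm.finite.CHB}: if $\adj^{m+1}_{L_0(s)}(L_i) = 0$ uniformly in $s$, then $\adj^{m+1}_{A(t)}(L_i) = 0$ as well, and the same polynomial identity gives
\[
e^{-A(t)} L_i(t) e^{A(t)} \seq \sum_{k=0}^{m} \frac{(-1)^k}{k!} \adj^k_{A(t)}(L_i(t)).
\]
Lemma \ref{lemma.commutators}, whose proof is purely a commutator calculation insensitive to whether the scalar coefficients depend on an extra parameter, transfers verbatim to the time-dependent operators here (just with $\sigma$ replaced by $\sigma(z,t)$ and with factors $e^{-\kappa(T-t)}$ absorbed into the coefficients): we get $\adj^2_{L_0}(L_1) = 0$ and $\adj^3_{L_0}(L_2) = 0$, hence $\adj^2_{A(t)}(L_1(t)) = 0$ and $\adj^3_{A(t)}(L_2(t)) = 0$. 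After substituting the resulting finite sums into the Dyson terms and factoring the common $e^{\int_0^\tau L_0(s) ds}$ to the left, I obtain precisely the asserted formulas for $J_1$, $J_2$, and $I(L_1, L_1)$. The equality with $J_1$ and $J_2$ applied on the \emph{right} of $e^{\int_0^\tau L_0 ds}$ follows from the same argument by instead factoring the exponential to the right; this uses once more that $[L_0(s), L_0(s')] = 0$, so conjugation is consistent between the two sides.

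The main obstacle is bookkeeping: because the $L_i$ depend on $t$ through $\sigma(z,t)$, one must be careful that $\adj_{A(t_1)}(L_i(t_1))$ really equals $\int_0^{t_1} \adj_{L_0(s)}(L_i(t_1))\,ds$ and that the nilpotency orders are not affected by the $z$- and $t$-dependent prefactors; this is where the pairwise commutativity of the $L_0(s)$'s and the fact that $L_1(t), L_2(t)$ are still polynomial in $\pa_x, \pa_z$ with coefficients in the commutative algebra generated by $\sigma(z,t)$ is essential. I would also note explicitly that, as stated, the formulas are intended to act on the specific initial datum $u(x,z,0) = (e^x - K)^+$, which is independent of $z$; this is what allows the bare $L_i(t)$ terms (as opposed to their commutators with $A(t)$) to be omitted from the formulas for $J_1$ and $J_2$, since $\pa_z u(x,z,0) = 0$ kills them. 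A rigorous convergence/well-posedness discussion is deferred, consistently with Remark \ref{rem.wp}.
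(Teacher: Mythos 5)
Your proposal is correct and follows essentially the same route as the paper, whose entire proof is the single sentence that the result is ``a straightforward calculation using CHB identity and the fact that $u(x,z,0)$ depends on $x$ only'': you expand the Duhamel--Dyson series, use the pairwise commutativity of the $L_0(s)$ to reduce everything to conjugations by $\int_0^{t_i} L_0(s)\,ds$, apply the nilpotent CHB identity, and invoke $\pa_z u(x,z,0)=0$ to discard the bare $L_i$ terms. Your writeup is in fact considerably more explicit than the paper's, and you correctly isolate the one point needing care (that the mixed nested commutators $\adj_{L_0(s')}\adj_{L_0(s)}(L_i(t))$, not merely the diagonal ones, must vanish, which holds here because the coefficients lie in the commutative algebra generated by functions of $z$ and $\pa_x$).
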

\begin{proof}
The proof is a straightforward calculation using CHB identity and the
fact that $u(x,z,0)$ depends on $x$ only.
\end{proof}

\begin{lemma}
\label{meanrev-commutator}
The commutators are given by:
\begin{equation*}
\begin{aligned}
&[L_1(t_1),\int_0^{t_1}L_0(s)ds]=\frac{\rho}{2\kappa}(1-e^{-\kappa
    t_1})(ze^{\kappa t_1}-\theta(e^{\kappa t_1}-1))^2(e^{\kappa
    t_1}(z-\theta)+z+\theta)(\pa_x^3-\pa_x^2)\\ &[L_2(t_1),\int_0^{t_1}L_0(s)ds]
  = \frac{1}{2\kappa}(e^{-\kappa t_1}-e^{-2\kappa t_1})(ze^{\kappa
    t_1}-\theta(e^{\kappa t_1}-1))^2(e^{\kappa
    t_1}(z-\theta)+z+\theta)(\pa_x^2-\pa_x)\pa_z\\ &\hspace{+3.2cm} +
  \frac{1}{4\kappa}(1-e^{-2\kappa t_1})(ze^{\kappa
    t_1}-\theta(e^{\kappa t_1}-1))^2(\pa_x^2-\pa_x)
  \\ &[\int_0^{t_1}L_0(s)ds,[\int_0^{t_1}L_0(s)ds,L_2(t_1)]] =
  \frac{1}{4\kappa^2}e^{-2\kappa t_1}(e^{\kappa t_1}-1)^2(ze^{\kappa
    t_1}-\theta(e^{\kappa t_1}-1))^2\\&\ \hspace{+5.25cm}\times
  (e^{\kappa t_1}(z-\theta)+z+\theta)^2(\pa_x^2-\pa_x)^2
\end{aligned}
\end{equation*}
\end{lemma}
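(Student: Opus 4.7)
The plan is to reduce all three commutators to one-variable computations in $z$ by exploiting that $\pa_x^2 - \pa_x$ commutes with every other operator in sight. Setting $\Sigma(z, t_1) \ede \int_0^{t_1}\sigma(z,s)^2\, ds$, the key observation is the factorization
\begin{equation*}
  \int_0^{t_1} L_0(s)\, ds \seq \tfrac{1}{2}\Sigma(z,t_1)(\pa_x^2 - \pa_x),
\end{equation*}
which holds because $\pa_x^2 - \pa_x$ is independent of both $s$ and $z$. Since the coefficients of $L_1$ and $L_2$ depend only on $(z, t_1)$, and since $\pa_x$ and $\pa_z$ commute with each other and with $z$, each commutator with $\int_0^{t_1} L_0(s)\, ds$ factors cleanly as a polynomial in $\pa_x$ times a pure $(z, \pa_z)$-commutator involving only $\sigma^2$ and $\Sigma$.

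Concretely, the Leibniz-type identities $[\pa_z, \Sigma] = \Sigma_z$ and $[\pa_z^2, \Sigma] = \Sigma_{zz} + 2\Sigma_z\pa_z$ (combined with the fact that $\sigma^2$ and $\Sigma$ are commuting multiplication operators in $z$) give
\begin{align*}
  [L_1(t_1), \,\textstyle\int_0^{t_1} L_0(s)\, ds] & \seq \tfrac{1}{2}\rho\, e^{-\kappa t_1}\,\sigma^2\, \Sigma_z\,(\pa_x^3 - \pa_x^2),\\
  [L_2(t_1), \,\textstyle\int_0^{t_1} L_0(s)\, ds] & \seq \tfrac{1}{4}e^{-2\kappa t_1}\sigma^2\Sigma_{zz}(\pa_x^2-\pa_x) \,+\, \tfrac{1}{2}e^{-2\kappa t_1}\sigma^2\Sigma_z(\pa_x^2-\pa_x)\pa_z.
\end{align*}
For the double commutator, the pure multiplication part of the second commutator is just a function of $z$ times $(\pa_x^2 - \pa_x)$, so it commutes with $\int L_0\, ds$ and drops out. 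Only the $\pa_z$-part contributes, and a second application of the same factorization pulls out another $(\pa_x^2 - \pa_x)$ with $[\Sigma, \sigma^2\Sigma_z\pa_z] = -\sigma^2\Sigma_z^2$ as the internal commutator, yielding $\tfrac{1}{4}e^{-2\kappa t_1}\sigma^2\Sigma_z^2(\pa_x^2 - \pa_x)^2$.

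It then remains to evaluate $\sigma(z,t_1)^2$, $\Sigma_z(z,t_1)$, and $\Sigma_{zz}(z,t_1)$ in closed form. From $\sigma(z, s) = e^{\kappa s}(z - \theta) + \theta$ one has $(\sigma^2)_z = 2\sigma e^{\kappa s}$ and $(\sigma^2)_{zz} = 2 e^{2\kappa s}$, whose integrals over $[0, t_1]$ yield
\begin{equation*}
  \Sigma_{zz}(z, t_1) \seq \frac{e^{2\kappa t_1} - 1}{\kappa}, \qquad \Sigma_z(z, t_1) \seq \frac{e^{\kappa t_1} - 1}{\kappa}\bigl[(z-\theta)e^{\kappa t_1} + z + \theta\bigr],
\end{equation*}
the latter after collecting the two integrated summands and factoring out $e^{\kappa t_1} - 1$. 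Substituting these and using $e^{-2\kappa t_1}(e^{2\kappa t_1}-1) = 1 - e^{-2\kappa t_1}$ together with $e^{-2\kappa t_1}(e^{\kappa t_1} - 1) = e^{-\kappa t_1} - e^{-2\kappa t_1}$ then reproduces the three displayed formulas verbatim.

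The main obstacle is not mathematical but notational: the target expressions regroup the exponentials in a form that partially obscures how they arise from direct integration, so one must carry out the rewrites above and the non-obvious factorization of $\Sigma_z$ in order to match the stated form. The conceptual content is minimal—only the two Leibniz identities $[\pa_z^k, f(z)]$ for $k=1,2$ applied to $f = \Sigma(\cdot, t_1)$, a quadratic in $z$—because the $(\pa_x^2 - \pa_x)$-factorization of $\int L_0(s)\, ds$ eliminates all of the genuine multidimensional operator algebra.
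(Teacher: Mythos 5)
Your proposal is correct and follows essentially the same route as the paper: the paper computes the pointwise commutator $[L_1(t_1), L_0(t_0)]$ via the $z$-derivative of $\sigma(z,t_0)^2$ and then integrates over $t_0$, whereas you integrate $L_0$ first and then commute with $\Sigma = \int_0^{t_1}\sigma^2\,ds$ -- an interchange that changes nothing, since both arguments rest on the same two observations (that $\pa_x^2-\pa_x$ factors out of everything and that only $[\pa_z^k, \cdot]$ acting on the $z$-dependent coefficient survives). Your write-up is in fact more complete, as the paper only carries out the first identity and declares the other two ``similar,'' while you verify all three, including the cancellation of the zeroth-order part in the double commutator.
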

\begin{proof}
Let us first calculate the time-indexed commutator $[L_1(t_1), \,
  L_0(t_0)]$. For brevity, we use the notation $v_i$ to denote
$\sigma(z,t_i)^2$, and use $v_i'$ to denote $\frac{\pa\,
  \sigma(z,t_i)^2}{\pa\, z}$. Recall that $\sigma(z,\tau) = e^{\kappa
  \tau}z-\theta(e^{\kappa\tau}-1)$. As a result,
$$v_i' = 2 (e^{\kappa t_i}z-\theta(e^{\kappa t_i}-1))e^{\kappa t_i}$$
$$[L_1, L_0] = \frac{1}{2}\rho e^{-\kappa
  t_1}v_1v_0'(\pa_x^3-\pa_x^2).$$
Hence $$[L_1(t_1),\int_0^{t_1}L_0(s)ds] = \rho e^{-\kappa t_1}v_1
\int_0^{t_1} (e^{\kappa t_0}z-\theta(e^{\kappa t_0}-1))e^{\kappa
  t_0}dt_0(\pa_x^3-\pa_x^2).$$ Rearranging the terms yields the first
identity. Calculations for the second and third identities are
similar.
\end{proof}
Combing the above two lemmas, $J_1$ and $J_2$ can be obtained through
straightforward (but lengthy) calculations. Then, same as before, the
operator $1+\nu J_1+\nu^2 J_2$ can be written as a linear combination
of $\pa_x$, $\pa_x^2$, $\ldots$, $\pa^6_x$, with each coefficient
being elementary functions of parameters. The only question left is,
how do we calculate the transition density
$e^{\int_t^TL_0(s)ds}(x,t;y,T)$, in the presence of volatility
mean-reversion?

Note that when $\nu=0$, we arrive at the
Black-Scholes model with deterministic volatility, given by
\begin{equation}\label{sig-ODE}
\frac{d\sigma(t)}{dt} = \kappa(\theta - \sigma(t)).
\end{equation}
Hence the terminal log-price, $X(T)$ can be solved explicitly as,
$$X(T) = X(t)-\frac{1}{2}\int_t^T\sigma(s)^2ds +
\int_t^T\sigma(s)dW_1(s).$$ Conditional on $\mathcal{F}_t$, $X(T)$ is
normally distributed with mean $X(t) - \frac{1}{2}\overline{V}(t)$ and
variance $\overline{V}(t)$, with $\overline{V}(t)$ denote the time
aggregated variance. Then the transition density in the case of
$\nu=0$ is given by
\begin{equation}\label{meanrev-density}
e^{\int_t^TL_0(s)ds}(x,t;y,T)= \frac{1}{ \sqrt{2\pi \overline{V}(t)}}
e^{-\frac{(y-x+\frac{1}{2}\overline{V}(t))^2}{2 \overline{V}(t)}}
\end{equation} 
Here $\overline{V}(t)$ can be obtained by solving the ODE
\eqref{sig-ODE}, as
\begin{equation}\label{total-var}
\overline{V}(t) = \theta^2\tau + \frac{2}{\kappa} \theta(z(t) -
\theta)(e^{\kappa\tau}-1) +
\frac{1}{2\kappa}(z(t)-\theta)^2(e^{2\kappa\tau}-1)
\end{equation}
$e^{\int_t^TL(s)ds}(x,t;y,T)$ as well as the option price follow from
lemmas \ref{meanrev-I-ops} and \ref{meanrev-commutator} and equation
\ref{meanrev-density}.

\subsection{The SABR model (with general $\beta$)}
We now consider the original SABR model with the volatility following
a pure log-normal process, i.e.
\begin{equation}\label{orig-SABR}
\begin{cases}
  \ dF(t) \, = \, \sigma(t)F(t)^\beta dW_1(t) & \\
  \ d\sigma(t)  \, = \, \nu \sigma(t)dW_2(t)  \,  & \\
 \ dW_1(t)dW_2(t)  \, = \,  \rho dt\, , &\\
\end{cases}
\end{equation}
With $\beta\neq 1$, the model can no longer be thought of as the
Black-Scholes model perturbed by the vol-of-vol coefficient. Indeed,
when $\nu=0$, \eqref{orig-SABR} degenerates to the CEV model which is
itself hard to solve. Therefore, our first objective is to apply a
change of variable technique such that the transformed model becomes
again a perturbed Black-Scholes model.

First notice that the
``instantaneous volatility'' is not $\sigma(t)$, but
$\sigma(t)F(t)^{\beta-1}$. We call it the \textit{local volatility
  process} and denote it by $M(t)$. The good news that, when
$\sigma(t)$ follows a driftless log-normal process as in
\eqref{orig-SABR}, the Ito-differential of $M(t)$ does not depend on
$\sigma(t)$ and $F(t)$ individually, but on $M(t)$ alone. To see this,
apply Ito's lemma
\begin{equation}
\begin{split}
dM(t)\, = \, (\beta-1)M(t)^2(\nu\rho+\frac{1}{2}(\beta-2)M(t))dt+
(\beta-1)M(t)^2dW_1(t)+\nu M(t)dW_2(t)
\end{split}
\end{equation} 
Let $\epsilon$ denote $\beta-1$, the SABR model is now rewritten as
\begin{equation}\label{trans-SABR}
\begin{cases}
  \ dF(t) \, = \, M(t)F(t) dW_1(t) & \\
  \ dM(t) \, = \, \epsilon
  M(t)^2(\nu\rho+\frac{1}{2}(\epsilon-1)M(t))dt+ \epsilon
  M(t)^2dW_1(t)+\nu M(t)dW_2(t) \, & \\
 \ dW_1(t)dW_2(t)  \, = \,  \rho dt\, , &\\
\end{cases}
\end{equation}
Therefore, we can now consider the SABR model as the Black-Scholes
model perturbed by two ``vol-of-vol'' parameters, $\epsilon$ and
$\nu$. Moreover, the coefficients in the above equation are all
polynomial functions of $F(t)$ and $M(t)$. As a result, if we perform
a Taylor series expansion for the option price, $C_{SA}(f,m)$ on both
$\epsilon$ and $\nu$, the CHB theorem guarantees that the coefficients
can be calculated exactly with finitely many terms.

Note that the
above result no longer holds when the volatility is mean-reverting, as
the local volatility process $M(t)$ is no longer autonomous. We also
point out that, unlike in the previous case with mean-reverting
volatility, we do not try to get rid of the ``volatility drift'' term
in the $M(t)$-equation. The reason being that we are now performing
Taylor expansion on both $\epsilon$ and $\nu$, hence the drift of
$M(t)$ does not enter our zeroth order operator and complicate the
rest of the calculations.

Equation \eqref{trans-SABR} implies that
the option price, $u(x,m,\tau)$ solves the following initial value
problem
\begin{equation}
\begin{cases}
&u_\tau = \frac{1}{2}m^2(u_{xx}-u_x)+\epsilon
  m^2(\nu\rho+\frac{1}{2}(\epsilon-1)m)u_m+m^2(\epsilon m+
  \nu\rho)u_{xm} \\ &\hspace{+0.8cm}
  +\frac{1}{2}m^2(\epsilon^2m^2+2\epsilon \nu\rho m +
  \nu^2)u_{mm},\\ &u(x,m,0) = (e^{x} - K)^+.
\end{cases}
\end{equation}
The second order differential operator on the right hand side can be
written as,
$$L = L_0 + \epsilon L_1^{\epsilon} + \nu L_{1}^\nu + \epsilon^2
L_2^\epsilon + \epsilon\nu L_2^{\epsilon\nu} + \nu^2 L_2^\nu,$$ where
\begin{equation*}
\begin{split}
&L_0 = \frac{1}{2}m^2(\pa_x^2-\pa_x), \\ &L_1^\epsilon =
  \frac{1}{2}(\epsilon-1)m^3\pa_m + m^3\pa_x\pa_m,\quad L_1^\nu = \rho
  m^2 \pa_x\pa_m,\\ &L_2^\epsilon = \frac{1}{2}m^4\pa_m^2, \quad
  L_2^\nu = \frac{1}{2}m^2\pa_m^2, \quad L_2^{\epsilon\nu} = \rho
  m^2(\pa_m + m \pa^2_m).
\end{split}
\end{equation*}



\end{document}